 \newtheorem{theorem}{Theorem}}
 \newtheorem{definition}{Definition}}
 \newtheorem{corollary}{Corrolary}}
 \newtheorem{proposition}{Proposition}}
 \newtheorem{lemma}{Lemma}}
 \newtheorem{remark}{Remark}}
\newcommand{\dam}{{\mbox{\rm dam}}}
\newcommand{\impr}{{\mbox{\rm impr}}}
\newcommand{\qed}{\hfill$\Box$}
\title{
On Pure and (approximate) Strong Equilibria\\ of Facility Location Games
}
\author{
Thomas Dueholm Hansen$^1$
\mbox{\ \ \ \ }
Orestis A. Telelis$^1$
\medskip\\
Department of Computer Science, Aarhus University, Denmark\\
\texttt{\{tdh, telelis\}@cs.au.dk}
}
\date{}
\begin{document}
\bibliographystyle{plain}
\maketitle
\footnotetext[1]{
Member of The Center for Algorithmic Game Theory, funded by the Carlsberg Foundation, Denmark.
}

\begin{abstract}
We study social cost losses in Facility Location games, where $n$ selfish
agents install facilities over a network and connect to them, so as to forward
their local demand (expressed by a non-negative weight per agent). Agents
using the same facility share fairly its installation cost, but every agent
pays individually a (weighted) connection cost to the chosen location. We
study the Price of Stability (PoS) of {\em pure} Nash equilibria and the Price
of Anarchy of {\em strong} equilibria (SPoA), that generalize pure equilibria
by being resilient to coalitional deviations. A special case of recently
studied network design games, Facility Location merits separate study as a
classic model with numerous applications and individual characteristics: our
analysis for unweighted agents on metric networks reveals constant upper and
lower bounds for the PoS, while an $O(\ln n)$ upper bound implied by previous
work is tight for non-metric networks. Strong equilibria do not always exist,
even for the unweighted metric case. We show that $e$-approximate strong
equilibria exist ($e=2.718\dots$). The SPoA is generally upper bounded by
$O(\ln W)$ ($W$ is the sum of agents' weights), which becomes tight
$\Theta(\ln n)$ for unweighted agents. For the unweighted metric case we prove
a constant upper bound. We point out several challenging open questions that
arise.
\end{abstract}

\section{Introduction}
\label{section:introduction}

Modern computer and communications networks constitute an arena of economic
interactions among multiple autonomous self-interested entities (network
access providers, end-users, electronic commerce enterprises, content storage
and distribution enterprises). The internet is perhaps the most massive and
global field where economic networking interactions take place. The recently
established study of {\em network formation games}~\cite{agt}(chapter 19) aims
at understanding how the competitive (or coalitional) activity of multiple
such {\em selfish agents} affects the network's characteristics and its
efficiency. In this paper we consider the setting of $n$ {\em Content
Distribution Network} enterprises interacting over a network. Enterprise $i$
takes decisions on where to store replicas of digital content over the
network, so as to satisfy access demand of local customers, situated at node
$u_i$. Demand is expressed by a non-negative weight $w_i$ for enterprise $i$.
Every enterprise chooses strategically a location $v$ on the network for
installation of content replicas, so as to minimize its individual expenses
for {\em (a)} storage/management of the content, and {\em (b)} weighted
connection (bandwidth/delay) costs to the chosen location. We study a {\em
Facility Location} game played among selfish agents (enterprises), that
naturally models this situation. We use a fair cost allocation rule - known as
Shapley cost-sharing~\cite{Anshelevich04} - for facility installation costs:
for a location $v$ installation cost $\beta_v$ is shared in a fair manner
among agents that receive service from $v$, so that agent $i$ pays
$\frac{w_i\beta_v}{W(v)}$, where $W(v)$ denotes the total demand forwarded to
$v$. The weighted connection cost is given by $w_id(u_i,v)$, where $d(u_i,v)$
denotes connection cost per unit of demand.

Our work focuses on bounding the {\em social cost} of cost-efficient and {\em stable} network infrastructures; stable networks will be represented by {\em pure} strategy Nash equilibria and {\em strong} equilibria of the corresponding Facility Location game. The social cost will be the sum of individual costs experienced by the agents-players of the game. Strong equilibria are an extension of pure Nash equilibria, dicussed in~\cite{Andelman07,Epstein07}, and essentially introduced by Aumann in~\cite{Aumann59}. Strong equilibria are resilient to coalitional {\em pure} deviations: a strategy profile is a strong equilibrium if no subset of agents can deviate (by jointly adopting a different {\em pure} strategy), so that all of its members are better off. Thus this notion captures the possibility of coalitional behavior among agents. Occurence of such behavior is naturally expected in rapidly evolving modern markets, especially the ones involving the exploitation and exchange of digital goods and services.

We derive bounds on the {\em Price of Stability} (PoS) of pure equilibria,
defined as the cost of the least expensive equilibrium relative to the
socially optimum cost~\cite{Anshelevich04}. For strong equilibria we derive
bounds on their (strong) {\em Price of Anarchy} (SPoA), i.e. the cost of the
most expensive strong equilibrium relative to the socially optimum cost. Let
us note that the {\em Price of Anarchy} of pure Nash equilibria was introduced
in~\cite{Koutsoupias99} as the cost of the most expensive pure Nash
equilibrium relative to the socially optimum cost, and measures essentially
social cost losses incurred due to selfishness of agents, {\em and} lack of
coordination among them. The PoS on the other hand measures social cost losses
only due to selfishness: if all players coordinate (even by following some
externally provided instructions, or by interacting on the basis of a common
protocol) they may reach the least expensive equilibrium. The notion of strong
equilibria inherently permits coordination among subsets of agents. Therefore
we can intuitively expect the SPoA to almost match the PoS. Our results
confirm this intuition.

The study of the price of stability in network design games with fair
allocation of network link costs was initiated in~\cite{Anshelevich04}. One
challenging problem remaining open since then is improving upon an upper bound
of $PoS=O(\log n)$ and a lower bound (recently shown in~\cite{Fiat06}) of
$\frac{12}{7}$, for unweighted players on undirected networks. A series of
recent works~\cite{Fiat06,Chekuri06,Chen06,Epstein07,Albers08} provides
results (polylogarithmic upper bounds) with respect to the social cost of pure
Nash and (approximate) strong equilibria in the network design game model
of~\cite{Anshelevich04}, also for the case of weighted players. We review
these results in section~\ref{section:related-work}. We also explain how the
Facility Location game that we study is a special case of the more general
model introduced in~\cite{Anshelevich04}, that also includes {\em ``delay''}
costs in using network links (refered to as connection costs in the case of
Facility Location), apart from fairly allocated {\em ``installation''} costs.
For the case of metric connection costs we were able to prove greatly improved
and almost tight bounds for the PoS and the SPoA; this makes the {\em metric}
Facility Location game and exceptional special case of the model introduced
in~\cite{Anshelevich04}.

\paragraph{Summary of Results}{
We analyze the PoS of the unweighted metric Facility Location game
(section~\ref{section:unweighted-metric-pos}), and prove constant upper and
lower bounds. Note that an $O(\ln n)$ general upper bound implied by the work
of~\cite{Anshelevich04} is tight for non-metric networks (we discuss this in
section~\ref{section:definitions}). Our technique relies on direct analysis of
social cost evolution during an iterative best response performed by the
agents, until they reach equilibrium. We show in particular that given any
initial configuration (including the social optimum), the social cost of the
reached equilibrium is at most $2.36$ times the initial social cost and at
least $1.45$ times this cost in the worst case. Metric Facility Location is
the first case of the model proposed in~\cite{Anshelevich04}, in which an
additional structural network property (triangle inequality) yields a constant
almost tight Price of Stability. It is also interesting that incorporation of
link delays (connection costs) makes the previously derived $O(H(n))$ upper
bounds tight for the non-metric unweighted game. Subsequently we study strong
equilibria (section~\ref{section:apx-strong-equilibria}). Although they do not
always exist, we prove that $\alpha$-approximate strong equilibria exist (no
subset deviation causes factor $\alpha$ improvement to all of its members),
for $\alpha\geq e=2.718\dots$ in general networks and weighted agents. For the
Price of Anarchy of approximate strong equilibria (SPoA) we show an $O(\ln W)$
general upper bound which becomes $\Theta(\ln n)$ for unweighted agents
(section~\ref{section:strong-poa}). For the metric unweighted case we prove a
constant upper bound (section~\ref{subsection:metric-unweighted-spoa}).

Except for the SPoA analysis in
paragraph~\ref{subsection:metric-unweighted-spoa}
(theorem~\ref{theorem:metric-spoa-ub}), the rest of the described results have
appeared in~\cite{Hansen08}, in a shorter version~\footnote{A result of
$2.36$-approximate strong equilibria that appeared in a previous version of
this report was erroneous and has been removed.}. We note that the analysis of
the SPoA of approximate strong equilibria for the metric unweighted case in
paragraph~\ref{subsection:metric-unweighted-spoa} essentially generalizes the
analysis of the corresponding result appearing in~\cite{Hansen08} (theorem 2);
this result concerned the SPoA of exact strong equilibria only {\em when} they
exist.
}

\begin{table}[t]
\center
\begin{tabular}{c|c|c|c}
 & Metric & \multicolumn{2}{c}{Non-metric}\\
\cline{3-4}
 & Unweighted & Unweighted & Weighted\\
\hline
PNE $|$ SE & $\surd$ $|$ $e$-apx & $\surd$ $|$ $e$-apx & $e$-apx $|$ $e$-apx\\
PoS & $\in(1.45,2.36)$ & $\Theta(\ln n)$ & $O(\ln W)$\\
SPoA & $O(1)$ & $\Theta(\ln n)$ & $O(\ln W)$\\
\hline
\end{tabular}
\caption{Summary: Facility Location Games with Fair Allocation of Facility Costs.}
\label{table:results}
\end{table}

\section{Related Work}
\label{section:related-work}

Anshelevich {\em et al.} first studied the Price of Stability for network
design games with general~\cite{Anshelevich03} and fair cost
allocation~\cite{Anshelevich03}. In the latter case $n$ agents wish to connect
node subsets over a network, by strategically selecting links to use. Every
network link is associated to two cost components, an {\em installation} cost
and a {\em delay} cost. Link installation cost is shared fairly among agents
using the same link in~\cite{Anshelevich04}, while every agent experiences a
delay cost given by a polynomial function of the number of agents using the link. The authors showed that {\em for unweighted agents} these network design games belong to the class of {\em potential} games introduced by Monderer and
Shapley in~\cite{Monderer}, hence they have pure strategy Nash equilibria. In
particular, a potential function defined in~\cite{Monderer} is associated to
these games in the following manner: an arbitrarily initialized iterative best
response procedure carried out by the players reaches a local minimum of the
potential function, which corresponds to a pure strategy Nash equilibrium of
the game. In~\cite{Anshelevich04} the authors developed elegant arguments
using the potential function, so as to upper bound the PoS for unweighted
agents. In case of polynomial delay costs of degree at most $k$ the PoS was
shown to be at most $O((k+1)\ln n)$. For the case of directed networks this
bound was shown to be tight (for k=0 and zero delays), whereas for undirected
networks a lower bound of $\frac{4}{3}$ was given. This was recently improved
to $\frac{12}{7}$ by Fiat et al.~\cite{Fiat06}, who also showed an upper bound
of $O(\log\log n)$ for $PoS$, when exactly one agent resides on each node of
the network and all agents need to connect to a common sink node (single-sink
case). For weighted agents a potential function proving existence of pure
equilibria was developed for only $2$ agents in~\cite{Anshelevich04}. Chen et
al.~\cite{Chen06} showed that pure strategy Nash equilibria do not always
exist for weighted network design games with fair allocation of link
installation costs. They developed a trade-off for the social cost of
approximate equilibria versus the approximation factor, as a function of
the maximum weight taken over all agents.

An extensive study of {\em strong} equilibria for network design games with
fair cost allocation (but without delays) appeared recently by Albers
in~\cite{Albers08}, for unweighted and weighted agents. Strong equilibria do
not always exist, but she showed that $O(H(n))$ and $O(\ln W)$-approximate
strong equilibria do exist ($W$ is the sum of the agents' weights) for
unweighted and weighted games respectively. She proved $O(H(n))$ and $O(\ln
W)$ upper bounds for the SPoA of approximate strong equilibria which are tight
for directed graphs. For undirected graphs she showed $\Omega(\sqrt{\log n})$
and $\Omega(\sqrt{\log W})$ lower bounds. Finally and most importantly, she
showed an $\Omega(\frac{\log W}{\log\log W})$ lower bound for the PoS of
weighted network design games. Strong equilibria in the context of
(single-sink) unweighted network design games with fair cost sharing were
first studied in~\cite{Epstein07}. The authors gave topological
characterizations for the existence of strong equilibria on directed networks, and proved that $SPoA=\Theta(\log n)$.

As of the recent literature, the gap for the identification of the PoS of
network design games with fair cost allocation remains open for unweighted
agents, with the lower bound of $\frac{12}{7}$ being the best known so far, to
the best of our knowledge. In effect, this means that the impact of an
undirected network structure to the PoS is not well understood, since the
upper bounding potential function arguments developed in~\cite{Anshelevich04}
do not incorporate network structure. The Facility Location game is a special
case of the model studied in~\cite{Anshelevich04}, that is interesting on its
own right: it finds numerous applications and exhibits intriguing
characteristics. It emboddies non-shareable delays explicitly and in a sense
specializes {\em single-sink} network design considered
in~\cite{Epstein07,Chen06}: we can simply augment the network with a node $t$
and set links $(v,t)$ to have fairly shareable cost, equal to the facility
opening cost at $v$. The original network links have a delay cost only. Then
every agent needs to choose at most two edges from the node it resides on, to
$t$.

\paragraph{Facility Location Games}{
An unweighted metric facility location game with uniform facility costs was
studied in the context of {\em selfish caching} in~\cite{Chun04}. The network
model of that work is essentially equivalent to the one we discuss here, apart
from the fact that fair cost-sharing of facility costs was {\em not} used: an
agent could connect to a facility payed exclusively by another agent. Another
difference is that facility opening cost was uniform accross all nodes of the
network, whereas we consider node-dependent costs. The PoA and PoS of this
model were shown to be unbounded. The authors devised an extension of their
game model, with payments exchanged among agents, in which the socially
optimum configuration of the original game is rendered a pure strategy Nash
equilibrium (hence $PoS=1$ for the extended game). Vetta studied a class of
games for {\em competitive} facility location~\cite{Vetta06}, for which he
proved existence of pure strategy Nash equilibria and an upper bound of $2$
for the PoA. In this competitive setting enterprises open facilities at
certain nodes and try to attract customers to connect to them. He also
illustrated applicability of his model in the context of the $k$-median
problem model. In~\cite{Chekuri06} the authors considered {\em single-sink}
network design with fair cost-sharing of all resources (network links) used by
agents. They argued how this model can be viewed as a Facility Location model,
but did not take connection costs (non-shareable delays) into account.
}

\section{Definitions and Preliminaries}
\label{section:definitions}

The network will be a complete graph $G(V, E)$, each edge $(u,v)$ of which is
associated to a non-negative cost $d(u,v)$. We consider a set $A$ of $n$
agents. Each agent $i$ resides on a node $u_i\in V$, and is associated to a
non-negative demand weight $w_i$. The strategy space of agent $i$ is $V$: $i$
chooses a location $v\in V$, where it will receive service from. Opening a
facility at node $v\in V$ costs $\beta_v$. A strategy profile (configuration)
is denoted by $s=(s_1,\dots, s_n)$, where $s_i\in V$. The total weight of
agents receiving service from $v$ under $s$ is $W_s(v)$. The cost $c_i(s)$
experienced by agent $i$ in $s$ is:

\[
c_i(s)=w_id(u_i,s_i)+\frac{w_i\beta_{s_i}}{W_s(s_i)}
\]

\noindent Every agent $i$ pays a fraction $\frac{w_i}{W_s(s_i)}$ of the
facility installation cost at $s_i$. Denote by $F_s\subseteq V$ the set of
facility locations specified under $s$. The {\em social cost} $c(s)$ is then:

\[
c(s)=
\sum_ic_i(s)=
\sum_iw_id(u_i,s_i)+\sum_i\frac{w_i\beta_{s_i}}{W_s(s_i)}=
\sum_iw_id(u_i,s_i)+\sum_{v\in F_s}\beta_v
\]

\noindent We use $W(I)$ for the sum of weights of agents in set $I$. By $c_I(s)$ we denote the total cost of agents in $I$ under $s$. Furthermore, given a facility $v\in F_s$, we use $c_v(s)$ to denote $\sum_{i:s_i=v}c_i(s)$.

\medskip

\noindent \textbf{Pure Nash Equilibria} The {\em unweighted} Facility Location
game specializes network design games first studied in~\cite{Anshelevich04},
and is a potential game~\cite{Monderer}; it is associated to a potential
function, that can be locally minimized by iterative best response performed
by players, and the reached local minimum corresponds to a pure Nash
equilibrium. In iterative best response we choose iteratively an arbitrary player $i$, and let him/her decide a strategy that minimizes its individual cost with respect to the current configuration $s_{-i}=(s_1,\dots, s_{i-1},s_{i+1},\dots, s_n)$ for the rest of the players. The process ends when no player $i$ can improve his/her individual cost under $s_{-i}$. Then $s$ is a pure Nash equilibrium.

A simple example shows that the PoA of pure equilibria can be $n$ in the worst
case. Take a network of two nodes $u,v$. Opening a facility on either node has
a cost $\beta$. We set $d(u,v)=\frac{(n-1)\beta}{n}$, and let $n$ agents
reside on $u$. Then if every agent $i$ plays $s_i=v$, $i=1\dots n$, the
resulting configuration is a pure Nash equilibrium of cost $n\beta$, because
every agent pays exactly $\beta$ and has no incentive to open a facility at
node $u$ by paying the same cost $\beta$. The social optimum consists of a
single facility opened on $u$ and all agents being locally serviced by it, at
zero connection cost. Thus the socially optimum cost is $\beta$.

Potential function arguments developed in~\cite{Anshelevich04} yield an upper
bound of $H(n)$ for the PoS, where $H(n)$ is the $n$-th harmonic number. This
bound is tight for unweighted non-metric Facility Location.
Fig.~\ref{figure:general-pos-lb} presents a non-metric lower bounding example
for unweighted agents. Assume uniform facility costs equal to 1. $2n$
unweighted agents reside on boldly drawn nodes, $n$ of them on $v_2$.
The social optimum consists of facilities on $v_1$ and $v_2$, where $v_2$
serves only the $n$ agents residing on $v_2$. The rest are served by $v_1$.
However they deviate to $v_2$ from $v_1$ one by one. It is then: $PoS\geq
(2H(n)-H(2n))/(2+n\epsilon)\geq (\ln n-\ln 2-1)/(2+n\epsilon)=\Omega(\ln n)$.
When it comes to metric networks however, the PoS is constant as we show in
the following section. This result essentially indicates the impact of the
network's structure (triangle inequality) to the social cost of efficient pure
Nash equilibria.

\begin{figure}[t]
\center
\includegraphics[scale=0.5]{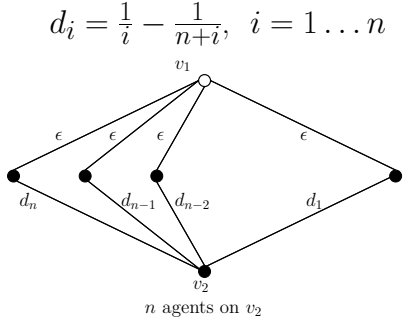}
\caption{Lower bounding example for unweighted (non-metric) PoS.}
\label{figure:general-pos-lb}
\end{figure}

\begin{definition}{\bf (Strong Equilibria)~\cite{Andelman07,Aumann59}}
A strategy profile $s$ is a strong equilibrium if no subset of agents can
deviate in coordination, so that each and every one of its members is better
off. It is an $\alpha$-approximate strong equilibrium if no subset of agents
can deviate in coordination, so that each and every one of its members is
better off by a factor strictly more than $\alpha\geq 1$.
\end{definition}

\section{Unweighted Metric Facility Location: The Price of Stability}
\label{section:unweighted-metric-pos}

We analyze evolution of an equilibrium through iterative best response, initialized at configuration $s^*$. In fact, for {\em any} strategy profile $s$ that is reached through iterative best response initialized at $s^*$, we will upper bound the ratio $c(s)/c(s^*)$. Thus we do not require that $s^*$ be the socially optimum configuration, nor that $s$ is an equilibrium strategy profile. We only require that $s$ is reached by iterative best response initialized at $s^*$. Then taking $s^*$ to be the socially optimum configuration and $s$ to be the reached equilibrium will yield an upper bound for the Price of Stability, as a corollary. To facilitate clarity, we give a brief description of the analysis' plan. Given any facility node $v\in F_{s^*}$ denote by $A_{s^*}(v)=\{i|s_i^*=v\}$ the subset of agents connected to $v$ in $s^*$. Then $c_v(s^*)$ will be the total cost incurred by $A_{s^*}(v)$ collectively. We will upper bound $c(s)/c(s^*)$ as follows:

\begin{equation}
\frac{c(s)}{c(s^*)}\leq
\frac{
\sum_{v\in F_{s^*}}\sum_{i\in A_{s^*}(v)}c_i(s)
}{
\sum_{v\in F_{s^*}}\sum_{i\in A_{s^*}(v)}c_i(s^*)
}=
\frac{
\sum_{v\in F_{s^*}}\sum_{i\in A_{s^*}(v)}c_i(s)
}{
\sum_{v\in F_{s^*}}c_v(s^*)
}\leq
\max_{v\in F_{s^*}}\frac{
\sum_{i\in A_{s^*}(v)}c_i(s)
}{
c_v(s^*)
}
\label{equation:plan}
\end{equation}

\noindent Define $\Delta c(i)=c_i(s)-c_i(s^*)$ to be the increase in cost caused by agent $i$ during iterative best response initialized at  $s^*$ and reaching $s$. For any subset $A'\subseteq A$ of agents we also use $\Delta c(A')=\sum_{i\in A'}(c_i(s)-c_i(s^*))$. To evaluate the upper bounding expression~(\ref{equation:plan}), we are going to use an upper bound on $\Delta c (A_{s^*}(v))=\sum_{i:s_i^*=v}\Delta c(i)$, valid {\em for any} $v\in F_{s^*}$. This is the increase in social cost caused during iterative best response by agents connected to $v$ in $s^*$. Then for any $v\in F_{s^*}$ we will maximize $\frac{c_v(s^*)+\Delta c(A_{s^*}(v))}{c_v(s^*)}$, by lower bounding $c_v(s^*)$ for any $v\in F_{s^*}$. In determining an upper bound on $\Delta c(A_{s^*}(v))$ we prove the following lemma, which charges any specific agent $i$ a bounded amount of social cost increase during the execution of iterative best response.

\begin{lemma}
Let $A_{s^*}(v)$ be the subset of agents that are connected to $v$ in $s^*$. For any $i\in A_{s^*}(v)$ define $A_{s^*}^i(v)\subseteq A_{s^*}(v)$ to be the subset of agents that have not yet deviated from $v$, exactly before the first deviation of $i$. Then $\Delta c(i)\leq \beta_v/|A_{s^*}^i(v)|$.
\label{lemma:ibr-charging}
\end{lemma}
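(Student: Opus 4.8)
The plan is to isolate a single quantity---the cost of $i$ in the final profile---and to bound it by a threshold read off from $i$'s first deviation. Write $k=|A_{s^*}^i(v)|$. At the instant just before $i$ first leaves $v$, every member of $A_{s^*}^i(v)$, and in particular $i$ itself, is still located at $v$, so the number $m$ of agents occupying $v$ at that instant satisfies $m\ge k$. Remaining at $v$ would therefore cost $i$ exactly $d(u_i,v)+\beta_v/m\le d(u_i,v)+\beta_v/k=:T$. Since $i$ moves by best response, its cost immediately after the deviation is no larger than the cost of this (rejected) stay-at-$v$ option, hence at most $T$.

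With the target inequality $c_i(s)\le T$ in hand the lemma is immediate. Indeed $c_i(s^*)=d(u_i,v)+\beta_v/|A_{s^*}(v)|$ and $|A_{s^*}(v)|\ge k$, so $\Delta c(i)=c_i(s)-c_i(s^*)\le T-c_i(s^*)=\beta_v/k-\beta_v/|A_{s^*}(v)|\le \beta_v/k$, as claimed (in fact slightly sharper than the stated bound). Thus the entire proof reduces to showing that $i$'s cost, once pushed down to at most $T$ by its first best response, has not climbed back above $T$ by the time $s$ is reached.

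To control $c_i$ over the remainder of the sequence I would examine the events that can change it. At each of $i$'s own later best responses its cost cannot rise, and whenever some other agent \emph{joins} $i$'s current facility the installation cost is split more finely and $c_i$ only drops; neither event can lift $c_i$ above $T$. The single dangerous event is an agent \emph{abandoning} the facility currently hosting $i$, which enlarges $i$'s share of the installation cost. This is the step I expect to be the crux, and it is precisely where the metric hypothesis of this section is indispensable: in a non-metric instance the facility holding $i$ can be drained completely, stranding $i$ at a cost far exceeding $T$ (consistently with the non-metric Price of Stability being only $\Theta(\ln n)$). To rule this out I would combine the triangle inequality with the best-response history: an agent abandoning $i$'s facility certifies a cheaper destination, and since that agent and $i$ were sharing the same facility their connection costs to that destination differ only by the triangle-inequality slack routed through the shared facility; one then argues that an accumulation of such departures pushing $c_i$ above $T$ would have exhibited an improving relocation for $i$ that is incompatible with the dynamics producing $s$ (either with an earlier best-response choice of $i$ or with $s$ having no pending improvement for $i$). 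Finally, for the use made of the lemma in~(\ref{equation:plan}), I would order the agents of $A_{s^*}(v)$ by the time of their first departure, so that $|A_{s^*}^i(v)|$ runs through the successive values $|A_{s^*}(v)|,|A_{s^*}(v)|-1,\dots$, and summing the per-agent bound yields $\Delta c(A_{s^*}(v))\le \beta_v H(|A_{s^*}(v)|)$.
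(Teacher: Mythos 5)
The decisive step of your plan---that once $i$'s first best response pushes its cost down to $T=d(u_i,v)+\beta_v/|A_{s^*}^i(v)|$, it ``has not climbed back above $T$'' by the time $s$ is reached---is exactly the part you leave unproven, and it cannot be proven in the form you need. The lemma must hold for \emph{every} profile $s$ reached along the run: the paper explicitly does not require $s$ to be an equilibrium, so you may not appeal to ``$s$ having no pending improvement for $i$.'' Between $i$'s turns, the co-users of $i$'s current facility can depart one by one and strand $i$ with the entire installation cost, so the invariant $c_i\leq T$ genuinely fails at intermediate profiles. Even at a terminal equilibrium the repair does not work: the equilibrium condition only rules out an improving return to $v$ at cost $d(u_i,v)+\beta_v/(W_s(v)+1)$, which is $d(u_i,v)+\beta_v$ once $v$ has been drained---far above $T$. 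Your instinct that the triangle inequality is ``indispensable'' here is also misplaced: Lemma~\ref{lemma:ibr-charging} is metric-free in the paper; the triangle inequality enters only afterwards, in Proposition~\ref{intermmediate-case} and Theorem~\ref{theorem:metric-pos-ub}, where it is used to \emph{lower} bound $c_v(s^*)$ via~(\ref{equation:connection-lb-0}) and~(\ref{equation:connection-lb-1}).

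The underlying mismatch is that you read $\Delta c(i)$ as a pointwise bound on the individual cost difference $c_i(s)-c_i(s^*)$, whereas the paper's proof (and the downstream use of the lemma) treats it as social-cost accounting. A single deviation from $v$ to $v'$ changes the \emph{social} cost only by the deviator's connection-cost change plus the possible closing of $v$ and opening of $v'$ (the four cases according to whether $k_i(v)=1$ and whether $\lambda_i(v')=1$), because fluctuations in the cost shares of bystanders cancel in the social cost. These per-deviation increments are then distributed over agents by a label-exchange charging scheme so that each agent is charged at most $\beta_v/|A_{s^*}^i(v)|$ and---crucially---agents that never deviate are charged zero. Your pointwise reading cannot deliver that zero: a never-deviating agent's individual cost \emph{does} rise, its share growing from $\beta_v/|A_{s^*}(v)|$ to $\beta_v/|A_s(v)|$, and accordingly your closing bound $\Delta c(A_{s^*}(v))\leq\beta_v H(|A_{s^*}(v)|)$ is weaker than the harmonic difference $\beta_v\bigl(H(|A_{s^*}(v)|)-H(|A_s(v)|)\bigr)$ of~(\ref{equation:delta-ub}) on which the constant bound of Proposition~\ref{intermmediate-case} depends. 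In short: your opening observation (the stay-at-$v$ threshold bounds $i$'s cost immediately after its first deviation) is sound and indeed echoes the inequality $\delta x_i^*+\beta_{v'}/\lambda_i\leq\beta_v/k_i^*$ used later in the paper, but the crux is missing, the proposed repairs conflict with the lemma's scope, and the bookkeeping would need to be restructured into the paper's per-deviation charging argument for the rest of the section to go through.
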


\begin{proof}
For simplicity let $|A_{s^*}^i(v)|=k_i(v)$. Clearly $i\in A_{s^*}^i(v)$. Let us analyze contribution of $i$ to social cost increase during its first deviation. By deviating, $i$ reduces its individual cost from $c_i$ to $c_i'$, by joining another facility node $v'$. Then:

\[
c_i=x_i(v)+\frac{\beta_v}{k_i(v)}
\mbox{,\ \ \ }
c_i'=x_i(v')+\frac{\beta_{v'}}{\lambda_i(v')}
\]

\noindent $x_i(v)$ and $x_i(v')$ is the connection cost payed by $i$ before and after its first deviation. $\lambda_i(v')$ is the number of agents sharing facility cost at $v'$, including $i$. Since $c_i(v')<c_i(v)$, it is  $x_i(v')-x_i(v)\leq \frac{\beta_v}{k_i(v)}-\frac{\beta_{v'}}{\lambda_i(v')}$. Let $\Delta c_v(i)=c_i'-c_i$ be the difference caused to the social cost by this single deviation of $i$. We examine the following cases:

\begin{enumerate}

\item $k_i(v)>1$, $\lambda_i(v') > 1$: Then 
$\Delta c_v(i) = x_i(v')-x_i(v)\leq \frac{\beta_v}{k_i(v)}-\frac{\beta_{v'}}{\lambda_i(v')}$.

\item $k_i(v)=1$, $\lambda_i(v') >1 $: Then 
$\Delta c_v(i) = -\beta_v + x_i(v') - x_i(v)\leq -\frac{\beta_{v'}}{\lambda_i(v')}$.

\item $k_i(v)>1$, $\lambda_i(v') = 1$: Then $\Delta c_v(i) = \beta_{v'}+x_i(v')-x_i(v)\leq \frac{\beta_v}{k_i(v)}$.

\item $k_i(v)=1$, $\lambda_i(v')=1$: Then $\Delta c_v(i) = \beta_{v'}-\beta_v+x_i(v')-x_i(v)\leq 0$.

\end{enumerate}

\noindent Clearly the above hold in general during execution of iterative best response, for any agent that performs a single deviation from a node $v$ to a node $v'$. 

Now we implement a charging procedure along with iterative best response. Give all agents an initial label $\ell(i)=i$, before executing iterative best response. The labels will be updated during iterative best response, so that  increases caused by agent $i$ will be charged to an appropriate agent $\ell(i)$. Initialize $\Delta c({\ell(i)})=0$ for all $i\in A$. We will make use of the auxiliary variables $k_i(v)$ and $\lambda_i(v')$, as they were described before. At any time $\lambda_i(v)$ is the number of agents (including $i$) connected to $v$, right after $i$ {\em joined} $v$. We need to initialize $\lambda_i(v)$ to a value for every $v\in F_{s^*}$ and $i\in A_{s^*}(v)$, to implement the charging scheme. Set $\lambda_{\ell(i)}=\lambda_i$ to a distinct value from $\{1,2,\dots, |A_{s^*}(v)|\}$. $k_{\ell(i)}(v)$ will be always (during iterative best response) the number of agents connected to $v$ exactly before deviation of $i$ from $v$ (before $i$ {\em leaves} $v$). Charging is then implemented by relabeling deviating agents in the following manner. 

\begin{enumerate}
	
\item If $k_{\ell(i)}(v)=\lambda_{\ell(i)}(v)$ no relabeling is needed. 

\item Otherwise there must be some $j\neq i$ connected to $v$ such that  $\lambda_{\ell(j)}(v)=k_{\ell(i)}(v)$. In this case exchange labels of $i$ and $j$. 

\end{enumerate}

\noindent Subsequently add the increase caused by deviation of $i$ to $\Delta c(\ell(i))$. Finally, set $\lambda_{\ell(i)}(v')$ equal to the number of agents connected to $v'$ right after $i$ has joined $v'$.

By the previous definitions it follows that if $k_{\ell(i)}(v)\neq\lambda_{\ell(i)}(v)$, then it is always $k_{\ell(i)}(v) >\lambda_{\ell(i)}(v)$, i.e. $i$ has joined $v$ before some agent $j$ with $\lambda_{\ell(j)}(v)=k_{\ell(i)}(v)$, but leaves $v$ ``out of order'', i.e. before $j$ leaves. By exchanging labels of $i$,$j$ we add the increase caused by $i$ to the agent that previously labeled $j$. Possible increases in 1.,2.,3.,4., imply that any agent is charged by the end of iterative best response at most $\frac{\beta_v}{|A_{s^*}^i(v)|}$ for some $i$. If we ``guess'' the exact order of first deviation of all agents, we can initialize $\lambda_i(v)=k_i(v)$. Then each $i$ itself will be charged at most $\frac{\beta_v}{|A_{s^*}^i(v)|}$.
\qed
\end{proof}

\bigskip

In what follows we are going to upper bound the ratio $\frac{\sum_{i\in A_{s*}(v)}c_i(s)}{c_v(s^*)}$ for any $v\in F_{s^*}$. 
If no agent $i\in A_{s^*}(v)$ ever deviates from playing $v$, then clearly it will be $\frac{\sum_{i\in A_{s*}(v)}c_i(s)}{c_v(s^*)}=1$. Our analysis focuses on two remaining cases: either {\em (i)} there is a non-empty set of agents $A_s(v)\subset A_{s^*}(v)$ that never deviate from $v$ during iterative best response or {\em (ii)} all agents $i\in A_{s^*}(v)$ deviate from $v$ during iterative best response. Case {\em (i)} is examined in proposition~\ref{intermmediate-case}, whereas the analysis is concluded by   analysis of {\em (ii)} in theorem~\ref{theorem:metric-pos-ub}.

\begin{proposition}
Let $s$ be a strategy profile reached by iterative best response initialized at strategy profile $s^*$. For any facility $v\in F_{s^*}$ define $A_{s^*}(v)=\{i|s_i^*=v\}$ and let $A_{s}(v)\subset A_{s^*}(v)$ the subset of agents that never deviated from $v$, during iterative best response. If $A_s(v)\neq \emptyset$, then $\sum_{i\in A_{s^*}(v)}c_i(s)\leq 2.36\cdot c_v(s^*)$.
\label{intermmediate-case}
\end{proposition}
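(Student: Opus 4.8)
The plan is to reduce the claim to a one-facility optimization. Since agents are unweighted, in $s^*$ every agent of $A_{s^*}(v)$ pays $\beta_v/n_v$ for the facility plus its own connection cost, where $n_v=|A_{s^*}(v)|$; writing $D_v=\sum_{i\in A_{s^*}(v)}d(u_i,v)$ this gives $c_v(s^*)=D_v+\beta_v$. By the definition of $\Delta c$,
\[
\sum_{i\in A_{s^*}(v)}c_i(s)=c_v(s^*)+\Delta c(A_{s^*}(v)),
\]
so it suffices to bound $1+\Delta c(A_{s^*}(v))/c_v(s^*)$ by $2.36$.

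For the numerator I would order the deviating agents $i_1,\dots,i_m$ by the time of their first deviation. Then $|A_{s^*}^{i_j}(v)|=n_v-j+1$, and since $A_s(v)\neq\emptyset$ we have $m\le n_v-1$. Lemma~\ref{lemma:ibr-charging} gives
\[
\Delta c(A_{s^*}(v))=\sum_{j=1}^{m}\Delta c(i_j)\le\beta_v\sum_{j=1}^{m}\frac{1}{n_v-j+1}=\beta_v\bigl(H(n_v)-H(n_v-m)\bigr),
\]
with $n_v-m=|A_s(v)|\ge 1$. Used on its own this estimate is hopeless: for $D_v=0$ and $m=n_v-1$ it grows like $H(n_v)$, so a constant bound is impossible without exploiting the metric and the surviving agent.

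The heart of the argument, and the step where the triangle inequality is essential, is therefore to show that $c_v(s^*)$ cannot be small when many agents leave $v$. Fix an agent $i_0\in A_s(v)$; it keeps $v$ open throughout, with final share $\beta_v/q$ where $q=W_s(v)\ge |A_s(v)|=n_v-m$. For a leaving agent $i$ sitting at $w=s_i$ in $s$, I would combine two best-response facts valid at $s$: agent $i$ does not prefer to (re)join $v$, so $\beta_w/W_s(w)\le d(u_i,v)+\beta_v/(q+1)-d(u_i,w)$, while $i_0$ does not prefer to follow $i$ to $w$, so $\beta_v/q\le d(u_{i_0},w)-d(u_{i_0},v)+\beta_w/(W_s(w)+1)$. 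Eliminating the shares at $w$ and applying the triangle inequality $d(u_{i_0},w)-d(u_i,w)\le d(u_{i_0},u_i)$ and $d(u_i,v)-d(u_{i_0},v)\le d(u_{i_0},u_i)$ yields a lower bound of the form $d(u_{i_0},u_i)\ge\Omega(\beta_v/q^{2})$, and hence, together with the term $d(u_{i_0},v)$ that also enters $D_v$, a lower bound on the distances contributing to $D_v$ for every leaving agent. Summed over the $m$ leavers this forces $D_v$ to grow at least linearly in $m$ (times a factor depending on $q$), which is precisely what compensates the harmonic numerator.

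Finally I would substitute this lower bound on $c_v(s^*)=D_v+\beta_v$ into $1+\Delta c(A_{s^*}(v))/c_v(s^*)$ and treat the result as a function of the two integers $n_v$ and $m$ (equivalently of $n_v$ and $q=n_v-m$); the supremum is attained at small values and equals the stated constant $2.36$. I expect the third paragraph to be the main obstacle, for two reasons: first, because iterative best response is asynchronous, the two best-response inequalities at $s$ must be justified carefully (they hold immediately if $s$ is an equilibrium, and otherwise must be read off from the configuration at each agent's last activation); and second, because the crude triangle-inequality estimate must be sharpened enough that the concluding optimization yields exactly $2.36$ rather than some larger constant.
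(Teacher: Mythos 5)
Your first two paragraphs coincide with the paper's argument: the charging lemma~\ref{lemma:ibr-charging} applied to the deviators ordered by first deviation gives exactly the paper's bound $\Delta c(A_{s^*}(v))\leq \beta_v(H(|A_{s^*}(v)|)-H(|A_s(v)|))$, and the paper likewise reduces everything to a lower bound on $c_v(s^*)=\beta_v+D_v$. The gap is in your third paragraph, and it is quantitative, not merely technical. Anchoring both best-response inequalities at the final profile $s$ can only compare the stayer's share $\beta_v/q$ against the rejoining share $\beta_v/(q+1)$, so after eliminating the shares at $w$ the surviving slack is $\beta_v/q-\beta_v/(q+1)=\beta_v/(q(q+1))$, i.e.\ your $\Omega(\beta_v/q^2)$ per leaver. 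That is a factor $q$ too weak: with $n_v=q^2$ and $m=q^2-q$ your lower bound gives $D_v\gtrsim m\cdot\frac{\beta_v}{2q(q+1)}\leq \frac{\beta_v}{2}$, so the denominator stays $O(\beta_v)$ while the harmonic numerator is $\beta_v(H(q^2)-H(q))\approx\beta_v\ln q$; the resulting bound grows like $\ln q$ and no constant, let alone $2.36$, can come out of the concluding optimization. The paper escapes this by anchoring the deviator's improvement condition \emph{at its own deviation time}, when $k_i^*$ agents still sit at $v$: combining $\delta x_i^*+\beta_{v'}/\lambda_i\leq\beta_v/k_i^*$ with the triangle inequality against the stayers' ``trivial deviation'' (which contributes the term $\beta_v/|A_s(v)|$) yields, as in~(\ref{equation:connection-lb-1}), $d(u_i,v)\geq\frac{1}{2}\bigl(\frac{\beta_v}{|A_s(v)|}-\frac{\beta_v}{k_i^*}\bigr)$. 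Since $k_i^*$ runs down from $|A_{s^*}(v)|$ as agents leave one by one, these terms sum as in~(\ref{equation:v-cost-lb}) to roughly $\frac{\beta_v}{2}\bigl(\frac{m}{q}-\ln\frac{n_v}{q}\bigr)$ --- linear in $m/q$ rather than $m/q^2$ --- and only then does the ratio collapse to a function of the single variable $y=|A_{s^*}(v)|/|A_s(v)|$, maximized numerically at $2.36$ for $y\simeq 2.47$ as in~(\ref{equation:metric-pos-ub}).

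The second problem you correctly flag but do not resolve: the proposition concerns \emph{any} profile $s$ reached along iterative best response, not an equilibrium, so your two inequalities ``valid at $s$'' simply do not hold --- an agent may at $s$ strictly prefer to rejoin $v$, and reading the inequalities off each agent's last activation does not help, because the populations at $v$, at $w$, and hence all shares at activation time differ uncontrollably from those at $s$. The paper's proof never uses stability of $s$: every inequality (the deviator's improvement and the comparison~(\ref{equation:deviation-preference}) with agents deviating later or never) is stated at a deviation moment, after a without-loss-of-generality reordering in which the agents of each $A_{s^*}(v)$ best-respond consecutively, which is what keeps the auxiliary counts $k_i^*,\lambda_i$ meaningful. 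So the skeleton of your plan (harmonic upper bound plus metric lower bound on $D_v$) is the right one and matches the paper, but the specific elimination you propose provably loses a factor of $q$ in the distance bound and rests on equilibrium conditions the hypothesis does not grant; repairing both essentially forces you back to the deviation-time analysis of proposition~\ref{intermmediate-case}.
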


\begin{proof}
As discussed previously, we will upper bound the ratio $\frac{c_v(s^*)+\Delta c(A_{s^*}(v))}{c_v(s^*)}$, by deriving an upper bound for $\Delta c(A_{s^*}(v))$ and a lower bound for $c_v(s^*)$. By lemma~\ref{lemma:ibr-charging} follows immediately:

\begin{equation}
\Delta c(A_{s^*}(v))\leq \beta_v[H(|A_{s^*}(v))-H(|A_s(v)|)]
\label{equation:delta-ub}
\end{equation}

\noindent because agents in $A_s(v)$ never deviated, whereas the rest caused a cost increase equal to $\beta_v$ times a harmonic series term each. In what follows we show a lower bound on $c_v(s^*)$.

Fix any facility $v\in F_{s^*}$, and define $I_{s^*}(v)=A_{s^*}(v)\setminus A_s(v)$. Without loss of generality assume such an order of agents, that agents $i\in I_{s^*}(v)$  (and for every $v\in F_{s^*}$) ``best-respond'' consecutively; e.g. assume that the algorithm scans the facilities in $F_{s^*}$ in an arbitrary fixed order and for each facility, the agents connected to it in an arbitrary fixed order. We focus only on the first deviation of each agent $i\in I_{s^*}(v)$ for some $v\in F_{s^*}$. Let $v'$ be the node that $i$ deviates to. For convenience we  use  $x_i^*=d(u_i,v)$ and let $\delta x_i^*=d(u_i, v')-d(u_i,v)$. Let $\lambda_i$ be the total number of agents connected to $v'$ {\em right after} deviation of $i$. The new cost of $i$ right after its first deviation is: $x_i^*+\delta x_i^*+\frac{\beta_{v'}}{\lambda_i}$. For any other agent $j\in A_{s^*}(v)\setminus\{i\}$ that, either deviates from $v$ to some node $v''$ after the deviation of $i$, or never deviated from $v$ (in this case consider a trivial deviation with $v''=v$), we have respectively:

\begin{equation}
d(u_j, v)+\delta x_j^*+\frac{\beta_{v''}}{\lambda_j}\leq
d(u_j, v')+\frac{\beta_{v'}}{\lambda_i}
\label{equation:deviation-preference}
\end{equation}

\noindent Sub\-sti\-tu\-te $d(u_j, v')$ in~(\ref{equation:deviation-preference}) by tri\-an\-gle in\-eq\-ua\-li\-ty: $d(u_j, v')\leq d(u_j, v)+d(u_i, v)+d(u_i, v')$. Let $k_i^*$ denote the number of agents connected to $v$ {\em right before} deviation of $i$ from $v$. Also  note that $\delta x_i^*+\frac{\beta_{v'}}{\lambda_i}\leq\frac{\beta_v}{k_i^*}$, because $i$ decreases its individual cost by deviating. Thus we obtain:

\begin{equation}
d(u_i, v)
\geq
\frac{1}{2}\Bigl(\delta x_j^*-\delta x_i^*+\frac{\beta_{v''}}{\lambda_j}-\frac{\beta_{v'}}{\lambda_i}\Bigr)
\geq
\frac{1}{2}\Bigl(\delta x_j^*+\frac{\beta_{v''}}{\lambda_j}-\frac{\beta_{v}}{k_i^*}\Bigr)
\label{equation:connection-lb-0}
\end{equation}

\noindent Because $d(u_i, v)\geq 0$, and because the latter has to hold for every pair of distinct agents $i,j\in A_{s^*}(v)$, we deduce:

\begin{equation}
d(u_i, v)
\geq \frac{1}{2}\left[
\max_{j:s_j^*=v}\left(
\delta x_j^*+\frac{\beta_{v''}}{\lambda_j}
\right)
-\frac{\beta_v}{k_i^*}
\right]
\label{equation:connection-lb-1}
\end{equation}

\noindent We use~(\ref{equation:connection-lb-1}) for the connection cost of agents in $I_{s^*}(v)$, and $0$ for agents in $A_s(v)=A_{s^*}(v)\setminus I_{s^*}(v)$; in essence for every $i\in A_s(v)$ we simply set in~(\ref{equation:connection-lb-1}) $v''=v$, $k_i^*=|A_s(v)|$, and take $\max_{j:s_j^*=v}\left(\delta x_j^*+\frac{\beta_{v''}}{\lambda_j}\right)=\frac{\beta_{v}}{|A_s(v)|}$. The cost $c_v(s^*)$ is then:

\begin{equationarray}{lcl}
c_v(s^*) & \geq &
\beta_v +
\frac{\beta_v}{2}
\sum_{k=|A_s(v)|}^{|A_{s^*}(v)|}
\Bigl(\frac{1}{|A_s(v)|}-\frac{1}{k}\Bigr)\nonumber\medskip\\
& = &
\beta_v+
\frac{\beta_v}{2}
\Bigl(\frac{|A_{s^*}(v)|-|A_s(v)|}{|A_s(v)|}-H(|A_{s^*}(v)|)+H(|A_s(v)|)\Bigr)
\label{equation:v-cost-lb}
\end{equationarray}

\noindent Using the bounds~(\ref{equation:v-cost-lb}) and~(\ref{equation:delta-ub}) we deduce the following upper bound:

\begin{equationarray}{lcl}
\left(\sum_{i\in A_{s^*}(v)}c_i(s)\right)/c_v(s^*) & \leq &
\frac{
1+\frac{1}{2}\Bigl(
\frac{|A_{s^*}(v)|-|A_s(v)|}{|A_s(v)|}+H(|A_{s^*}(v)|)-H(|A_s(v)|)
\Bigr)
}{
1+\frac{1}{2}\Bigl(
\frac{|A_{s^*}(v)|-|A_s(v)|}{|A_s(v)|}-H(|A_{s^*}(v)|)+H(|A_s(v)|)
\Bigr)
}\nonumber\medskip\\
& \leq &
\frac{
1.5+|A_{s^*}(v)|/|A_s(v)|+\ln(|A_{s^*}(v)|/|A_s(v)|)
}{
0.5+|A_{s^*}(v)|/|A_s(v)|-\ln(|A_{s^*}(v)|/|A_s(v)|)
}\label{equation:metric-pos-ub}
\end{equationarray}

\noindent The latter inequality follows by bounds for the harmonic number, i.e. $\gamma+\ln m \leq H(m)\leq 1+\ln m$, where $\gamma > 0.5$ is the {\em Euler} constant. By setting $y=|A_{s^*}(v)|/|A_s(v)|$, we can numerically maximize the upper bound with respect to $y$, to at most $2.36$, for $y\simeq 2.47$.\qed
\end{proof}

\begin{theorem}
If $s$ is a strategy profile for the unweighted Metric Facility Location game, reached by iterative best response initialized at a strategy profile $s^*$, then $c(s^*)\leq 2.36 c(s)$.
\label{theorem:metric-pos-ub}
\end{theorem}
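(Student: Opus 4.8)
The plan is to finish the per-facility analysis begun in Proposition~\ref{intermmediate-case} and then assemble the global bound through the decomposition~(\ref{equation:plan}). Fix $v\in F_{s^*}$. If no agent of $A_{s^*}(v)$ ever leaves $v$, the local ratio $\frac{\sum_{i\in A_{s^*}(v)}c_i(s)}{c_v(s^*)}$ equals $1$, and Proposition~\ref{intermmediate-case} already bounds it by $2.36$ whenever a nonempty set $A_s(v)$ of agents survives at $v$ (case (i)). Hence the only missing ingredient is case (ii): \emph{every} agent of $A_{s^*}(v)$ eventually deviates from $v$, so that $v\notin F_s$. The goal is to prove $\sum_{i\in A_{s^*}(v)}c_i(s)\leq 2.36\,c_v(s^*)$ in this case as well, so that the per-facility bound holds uniformly.

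For the numerator I would reuse Lemma~\ref{lemma:ibr-charging} unchanged: since now all $|A_{s^*}(v)|$ agents deviate, summing the charges $\beta_v/|A_{s^*}^i(v)|$ over the guessed order of first departures gives $\Delta c(A_{s^*}(v))\leq \beta_v H(|A_{s^*}(v)|)$, whence $\sum_{i\in A_{s^*}(v)}c_i(s)\leq c_v(s^*)+\beta_v H(|A_{s^*}(v)|)$. The substance is again a matching lower bound on $c_v(s^*)=\beta_v+\sum_{i\in A_{s^*}(v)}d(u_i,v)$, i.e. on the connection costs, along the lines of~(\ref{equation:connection-lb-1}).

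The main obstacle is the disappearance of the \emph{surviving} agent that anchored~(\ref{equation:connection-lb-1}) in Proposition~\ref{intermmediate-case}: there a never-leaving $j\in A_s(v)$ furnished the term $\beta_v/|A_s(v)|$ inside $\max_j(\delta x_j^*+\beta_{v''}/\lambda_j)$. In case (ii) I would anchor instead on the \emph{last} agent $j_0\in A_{s^*}(v)$ to leave $v$. The key observation is that $j_0$ still sits at $v$ at the moment every other agent $i$ performs its first departure, so the preference inequality~(\ref{equation:deviation-preference}) applies with $j=j_0$ and delivers the same triangle-inequality lower bound on $d(u_i,v)$ used in~(\ref{equation:connection-lb-1}). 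Because $j_0$ never shares $v$ with a surviving agent, the effective anchoring share is the singleton value $\beta_v$, so repeating the summation~(\ref{equation:v-cost-lb}) with $|A_s(v)|=1$ yields the valid bound $c_v(s^*)\geq \beta_v\bigl(1+\frac{1}{2}(|A_{s^*}(v)|-H(|A_{s^*}(v)|))\bigr)$. Bounding $j_0$'s own terminal cost by the cost it would pay sitting alone at $v$ lets me treat $j_0$ as if it had survived. Substituting the two bounds into $\frac{c_v(s^*)+\Delta c(A_{s^*}(v))}{c_v(s^*)}$ produces exactly the $|A_s(v)|=1$ instance of~(\ref{equation:metric-pos-ub}); since that expression attains its maximum $2.36$ at the interior point $y\simeq 2.47$, its value at the boundary $y=|A_{s^*}(v)|$ does not exceed $2.36$, so the constant is preserved. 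The delicate point to verify is the legitimacy of a \emph{temporary} anchor $j_0$ that itself eventually leaves, and that its eventual departure only \emph{decreases} the summed quantities.

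With all three cases giving $\sum_{i\in A_{s^*}(v)}c_i(s)\leq 2.36\,c_v(s^*)$ for every $v\in F_{s^*}$, the rightmost expression of~(\ref{equation:plan}) is at most $2.36$, which establishes the factor-$2.36$ bound between the social costs of the initial profile $s^*$ and the profile $s$ reached from it that is asserted in Theorem~\ref{theorem:metric-pos-ub}. A final bookkeeping check is that facilities $v\in F_{s^*}\setminus F_s$, open in $s^*$ but closed in $s$, contribute no installation term to $c(s)$; this causes no difficulty because~(\ref{equation:plan}) sums the terminal costs $c_i(s)$ over the fixed partition $\{A_{s^*}(v)\}_{v\in F_{s^*}}$ of the agents, independently of where they finally settle.
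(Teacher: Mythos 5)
Your reduction of case \emph{(ii)} to the $|A_s(v)|=1$ instance of~(\ref{equation:metric-pos-ub}) hinges on the claim that the last agent $j_0$ to leave $v$ supplies an ``effective anchoring share'' of $\beta_v$ in~(\ref{equation:connection-lb-1}). That claim is false, and you flagged the delicate point yourself without resolving it. The quantity that enters the lower bound~(\ref{equation:connection-lb-0}) for an anchor $j$ is its \emph{post-deviation} marginal value $\delta x_j^*+\beta_{v''}/\lambda_j$, not the share it would pay by remaining at $v$. A surviving agent in Proposition~\ref{intermmediate-case} anchors at $\beta_v/|A_s(v)|$ precisely because its trivial deviation ($v''=v$) \emph{realizes} that value. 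But $j_0$ does leave, and it leaves exactly because its realized value $\delta x_{j_0}^*+\beta_{v''}/\lambda_{j_0}$ is smaller than $\beta_v$ --- possibly much smaller. No preference inequality runs in the direction you need: every time $j_0$ stayed at $v$ it was paying the then-current share $\beta_v/k$, not $\beta_v$, and its final departure witnesses that sitting alone at $v$ is \emph{dispreferred} to leaving. Treating $j_0$ as a survivor is therefore sound only for the numerator (it overestimates $j_0$'s terminal cost) but unsound for the denominator. Concretely, place all $m=|A_{s^*}(v)|$ agents on $v$ itself, and let a facility $v'$ at distance $\epsilon$ carry enough agents that its share is below $\beta_v/m-\epsilon$: iterative best response drains $v$ entirely, yet $c_v(s^*)=\beta_v$, contradicting your claimed bound $c_v(s^*)\geq\beta_v\bigl(1+\frac{1}{2}(m-H(m))\bigr)$. (The theorem survives this example only through the subcase you do not treat: when every deviator's new marginal cost is at most $\beta_v/m$, the total new cost is at most $\beta_v$ and the local ratio is at most $1$.)

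The paper closes the gap by interpolating between these extremes with a parameter your argument pins at $1$: it takes $r$ to be the \emph{largest} integer with $\delta x_i^*+\beta_{v'}/\lambda_i\leq\beta_v/r$ for all deviators, so the true anchoring value is located near $\beta_v/r$, yielding~(\ref{equation:x-lb}) and the lower bound~(\ref{equation:v-cost-lb-1}) on $c_v(s^*)$ --- informative only for agents with $k_i^*\geq r$. For the last $r$ leavers, for whom the connection-cost bound is vacuous, the paper observes that their deviations add at most $r\cdot\beta_v/r=\beta_v$ of new connection cost, which is exactly offset by the $\beta_v$ saved when the facility at $v$ closes; these agents can therefore be treated as virtual survivors, reducing to Proposition~\ref{intermmediate-case} with $|A_s(v)|=r$ and hence to~(\ref{equation:metric-pos-ub}) with $y=|A_{s^*}(v)|/r$ ranging over all values, whose maximum $2.36$ was already computed there. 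Your proposal commits to the single boundary value $y=|A_{s^*}(v)|$ regardless of the dynamics; any repair of the anchoring step essentially forces you to reintroduce $r$ and the closed-facility offset, i.e.\ the paper's argument.
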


\begin{proof}
We only need to complement the result of proposition~\ref{intermmediate-case} by considering the case where $A_s(v)=\emptyset$. That is, $I_{s^*}(v)=A_{s^*}(v)$, and all agents connected to $v$ in $s^*$ deviate during iterative best response. If for each $i\in I_{s^*}(v)$ that deviates from $v$ to $v'$ ($v'$ is not necessarily the same for every $i$) it is $\delta x_i^*+\frac{\beta_{v'}}{\lambda_i}\leq \frac{\beta_v}{|A_{s^*}(v)|}$, then:

\[
\sum_{i\in A_{s^*}(v)}c_i(s)=
\sum_{i\in A_{s^*}(v)}\left(
x_j^*+\delta x_i^*+\frac{\beta_{v'}}{\lambda_i}
\right)\leq \beta_v+\sum_{i\in A_{s^*}(v)}x_i^*=c_v(s^*)
\]

\noindent Thus, assume there is at least one agent $j$ with $\delta x_j^*+\frac{\beta_{v'}}{\lambda_j} > \frac{\beta_v}{|A_{s^*}(v)|}$. In general, let $r$ be the {\em largest} integer, so that $\delta x_i^*+\frac{\beta_{v'}}{\lambda_i}\leq \frac{\beta_v}{r}$ for all $i\in I_{s^*}(v)=A_{s^*}(v)$. Using the same arguments that led to~(\ref{equation:connection-lb-1}), and taking $\max_i\left(x_i^*+\frac{\beta_{v'}}{\lambda_i}\right)=\frac{\beta_v}{r}$, we deduce:

\begin{equation}
x_i^*=d(u_i, v)\geq
\frac{1}{2}\left(\frac{\beta_v}{r}-\frac{\beta_v}{k_i^*}\right)
\mbox{,\ \ for $k_i^*\geq r$\ \ \ and\ \ \ $x_i^*=d(u_i,v)\geq 0$,\ \ for $k_i^*\leq r$}
\label{equation:x-lb}
\end{equation}

\noindent Using~(\ref{equation:x-lb}) we end up with a similar lower bound to~(\ref{equation:v-cost-lb}) for $1\leq r\leq |A_{s^*}(v)|$:

\begin{equation}
c_v(s^*) \geq
\beta_v +
\frac{\beta_v}{2}
\sum_{k=r}^{|A_{s^*}(v)|}
\Bigl(\frac{1}{r}-\frac{1}{k}\Bigr)
=
\beta_v+
\frac{\beta_v}{2}
\Bigl(\frac{|A_{s^*}(v)|-r}{r}-H(|A_{s^*}(v)|)+H(r)\Bigr)
\label{equation:v-cost-lb-1}
\end{equation}

\noindent To finish the proof, we note that by~(\ref{equation:x-lb}), deviation of agents with $k_i^*\leq r$, yields a total cost increase (payed as new connection cost) of at most $r\times\frac{\beta}{r}=\beta_v$, which is exactly equal to the cost saved by closing the facility at $v$ (because all agents deviate). Thus we can simulate the situation with the case analyzed in proposition~\ref{intermmediate-case}, by setting $A_s(v)=\{i\in A_{s^*}(v)|k_i^*\leq r\}$ and $r=|A_s(v)|$, so as to end up with~(\ref{equation:metric-pos-ub}). This is technically equivalent to assuming that agents $i$ with $k_i^*\leq r$ never deviated.\qed
\end{proof}

\begin{corollary}
The Price of Stability for the unweighted Metric Facility Location game with fairly allocated facility costs is at most $2.36$.
\end{corollary}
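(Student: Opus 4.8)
The plan is to read the corollary as the direct specialization of Theorem~\ref{theorem:metric-pos-ub} obtained by choosing the initialization to be optimal. I would emphasize first that the theorem was proved for an \emph{arbitrary} starting profile $s^*$ and an \emph{arbitrary} profile $s$ reachable from it by iterative best response; neither optimality of $s^*$ nor equilibrium of $s$ was used in its derivation. So I am free to instantiate whatever pair is convenient.

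Concretely, I would let $s^*$ be a socially optimum configuration, so that $c(s^*)$ is exactly the optimum cost appearing in the definition of the Price of Stability. Invoking the potential-game structure recorded earlier for the unweighted case, iterative best response started at $s^*$ strictly decreases the associated potential at every improving move and hence halts at a local minimum of the potential, that is, at a pure Nash equilibrium $s$. Feeding this particular pair into Theorem~\ref{theorem:metric-pos-ub} (equivalently, combining the per-facility bound of Proposition~\ref{intermmediate-case} with the telescoping over $v\in F_{s^*}$ in~(\ref{equation:plan})) gives $c(s)\leq 2.36\,c(s^*)$, i.e. the reached equilibrium costs within a factor $2.36$ of the optimum.

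It then remains only to match this against the definition of the Price of Stability as the ratio of the \emph{cheapest} pure equilibrium to the optimum. Since I have exhibited one equilibrium $s$ with $c(s)\leq 2.36\,c(s^*)$, the least expensive equilibrium can cost no more, so the ratio is at most $2.36$. I do not expect a genuine obstacle here: the corollary is essentially a one-line consequence, and the only points needing care are the two facts it tacitly relies on, namely that iterative best response actually converges (guaranteed by the potential function in the unweighted case) and that Theorem~\ref{theorem:metric-pos-ub} is initialization-independent, so that launching the procedure from the optimum is legitimate. Both are already in hand from the preceding development, which is precisely why the statement is phrased for arbitrary $s^*$ and $s$ rather than for the optimum and an equilibrium directly.
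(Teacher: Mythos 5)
Your proposal is correct and follows exactly the paper's intended route: Section~\ref{section:unweighted-metric-pos} explicitly sets up Theorem~\ref{theorem:metric-pos-ub} for arbitrary $s^*$ and any $s$ reachable by iterative best response, precisely so that instantiating $s^*$ as the social optimum and letting the potential function (from the unweighted game's potential-game structure) guarantee convergence to a pure Nash equilibrium $s$ yields the corollary. You also correctly read the intended inequality $c(s)\leq 2.36\,c(s^*)$, despite the theorem statement's typographical reversal of the roles of $s$ and $s^*$.
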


\subsection{A Lower Bound on the Price of Stability}

We analyze a worst-case example, that makes the PoS for unweighted metric Facility Location at least $1.45$ asymptotically. Experimental evidence shows a lower bound $>1.77$, which we believe is tight for PoS. This is because the analysis of theorem~\ref{theorem:metric-pos-ub} embodies some losses due to bounding of harmonic numbers by logarithms.

Our construction appears in Fig.~\ref{figure:metric-pos-lb}(a). Take $2n$ agents, $n$ of them residing on a singe node $v$ of the network. Facility opening costs are assumed $1$ everywhere. The social optimum $s^*$ has $1+\sqrt{n}$ facilities: $v$ and $v_l^*$, $l=1\dots k$,  where $k=\sqrt{n}$. Let the $n$ agents residing on $v$ be serviced by $v$ in $s^*$, whereas the rest are equally partitioned to facilities $v_l^*$, $l=1\dots k$; every facility $v_l^*$ services a batch of $k=\sqrt{n}$ agents. In the least expensive equilibrium $s$, all agents from each facility $v^*_l$ of $s^*$ will deviate and miss-connect to $v$. We will analyze for {\em any} single facility $v_l^*$ - henceforth denoted by $v^*$ - the ratio $\left(\sum_{i\in A_{s^*}(v^*)}c_i(s)\right)/c_{v^*}(s^*)$. Analysis for the rest of the facilities will be similar, and will yield asymptotically the same result. In particular, by choosing to analyze {\em any} batch of $k$ agents that deviate from $v^*$ to $v$, the PoS will be lower bounded as:

\begin{equation}
PoS=
\lim_{n\rightarrow\infty}
\frac{
1+\sum_{l=1}^k\sum_{i\in A_{s^*}(v^*_l)}c_i(s)}{1+\sum_{l=1}^kc_{v^*_l}(s^*)}
\geq
\lim_{n\rightarrow\infty}\frac{
\sum_{i\in A_{s^*}(v^*)}c_i(s)
}{
(1/\sqrt{n}+c_{v^*}(s^*))
}
\label{equation:metric-pos-lb}
\end{equation}

For some {\em constant} $p\in (0,1)$ we will have $r=\lceil(1-p)k\rceil$ of the agents in $A_{s^*}(v)$ increase their connection cost significantly by deviating, hence the social cost. For every $i\in A_{s^*}(v)$ let $x_i^*$ be the distance of $i$ from $v^*$, and $\delta x_i^*$ the increase in connection cost after deviation. We determine $x_i^*$ and $\delta x_i^*$ by following  best responses of agents in $A_{s^*}(v^*)$. Let $\lambda\geq n$ be the number of agents already connected to $v$ before agents of $A_{s^*}(v^*)$ deviate to $v$ in order. In particular:

\begin{itemize}
	
\item  For $i=1\dots r$ set $\delta x_i^*=\frac{1}{k-i+1}-\frac{1}{\lambda+i}-\epsilon$; a small decrease $\epsilon>0$ causes $i$ to deviate to $v$.

\item For each of the remaining $k-r$ agents set $\delta x_i^*=\frac{1}{k-r+1}$.

\end{itemize}

By similar arguments as for the derivation of (\ref{equation:deviation-preference}),(\ref{equation:connection-lb-0}) and triangle inequality, we obtain $x_i^*=d(u_i,v^*)=\max\{0,\frac{1}{2}(\max\delta x_j^*-\delta x_i^*)\}$, where $\max_j\delta x_j^*=\frac{1}{k-r+1}$. This yields $x_i^*=0$ for $i>r$ (the remaining $k-r$ agents). Note that the $k-r$ remaining agents will deviate to $v$, because they prefer to pay strictly less than $\frac{1}{k-r+1}+\frac{1}{\lambda+r}$, instead of a cost share $\frac{1}{k-r}$ each, for the facility at $v^*$. By deviation they ``shut-down'' the facility at $v^*$ and decrease the social cost by $1$, but pay a total connection cost $\frac{k-r}{k-r+1}$ for $v$, which tends to $1$ for large $n$. Let us now derive the cost $c_{v^*}(s^*)$. For convenience we use the notation $\Delta H(p,q)$ to denote $H(p)-H(q)$. For the first $r$ agents, summing up as in~(\ref{equation:v-cost-lb}) yields:

\begin{equation}
c_{v^*}(s^*)=
1+\frac{1}{2}
\Bigl[\frac{r}{k-r+1}- \Delta H(k,k-r)\Bigr]-
\frac{1}{2}\frac{r}{\lambda+r}+\frac{1}{2}\Delta H(\lambda+r,\lambda)
\label{equation:v-opt}
\end{equation}

\noindent The cost of agents in $A_{s^*}(v^*)$ after deviation is: $\sum_{i\in A_{s^*}(v)}c_i(s)= c_{v^*}(s^*)-1+\frac{k-r}{k-r+1}+\sum_{i=1}^r\delta x_i^*$. By substituting $\sum_{i=1}^r\delta x_i^*=\Delta H(k, k-r)-\Delta H(\lambda+r,\lambda)$, we obtain:

\begin{figure}[t]
\center
\subfloat[Lower bounding example.]{
\begin{minipage}{120pt}
\center
\includegraphics[scale=0.5]{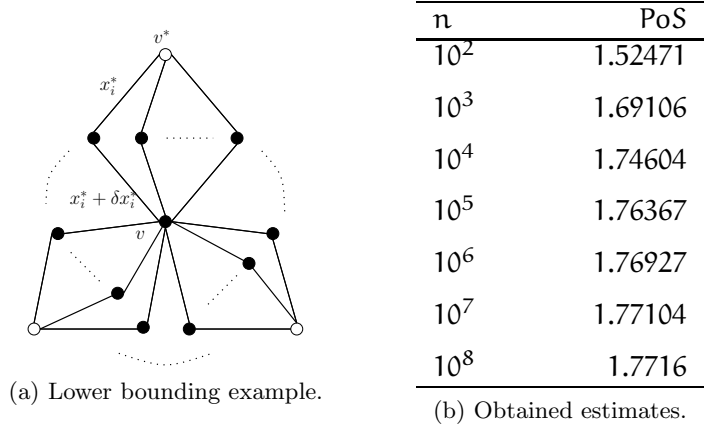}
\end{minipage}
}\quad\quad
\subfloat[Obtained estimates.]{
\begin{minipage}{120pt}
\center
\begin{tabular}{lr}
\hline
$n$\quad & $PoS$\\
\hline
$10^2$\mbox{\ \ \ \ } &\mbox{\ \ \ \ } $1.52471$\medskip\\
$10^3$\mbox{\ \ \ \ } &\mbox{\ \ \ \ } $1.69106$\medskip\\
$10^4$\mbox{\ \ \ \ } &\mbox{\ \ \ \ } $1.74604$\medskip\\
$10^5$\mbox{\ \ \ \ } &\mbox{\ \ \ \ } $1.76367$\medskip\\
$10^6$\mbox{\ \ \ \ } &\mbox{\ \ \ \ } $1.76927$\medskip\\
$10^7$\mbox{\ \ \ \ } &\mbox{\ \ \ \ } $1.77104$\medskip\\
$10^8$\mbox{\ \ \ \ } &\mbox{\ \ \ \ } $1.7716$\\
\hline
\end{tabular}
\end{minipage}
}
\caption{Experimental estimation of a PoS lower bound for the unweighted metric case.}
\label{figure:metric-pos-lb}
\end{figure}

\begin{equation}
\sum_{i\in A_{s^*}(v)}c_i(s)=
\frac{k-r}{k-r+1}+\frac{1}{2}
\Bigl[\frac{r}{k-r+1}+\Delta H(k,k-r)\Bigr]-
\frac{1}{2}\left(
\frac{r}{\lambda+r}+\Delta H(\lambda+r,\lambda)
\right)
\label{equation:eq-cost}
\end{equation} 

\noindent We simplify~(\ref{equation:v-opt}) and~(\ref{equation:eq-cost}) and substitute to~(\ref{equation:metric-pos-lb}) appropriately. For $\Delta H(k,k-r)$ we use the following bounds:

\[
\gamma-1+\ln\frac{1}{p+1/k}\leq 
\Delta H(k,k-r)\leq
1-\gamma+\ln\frac{1}{p-1/k}
\]

\noindent where $\gamma>0.5$ is the {\em Euler} constant. Also we use $\frac{1-p-1/k}{p+2/k}\leq \frac{r}{k-r+1}\leq \frac{1-p}{p}$, because $(1-p)k-1\leq r=\lceil (1-p)k\rceil\leq (1-p)k+1$. Then~(\ref{equation:metric-pos-lb}) becomes:

\begin{small}
\begin{equation}
PoS \geq \lim_{n\rightarrow\infty}
\frac{
\frac{k-r}{k-r+1}+\frac{1}{2}
\Bigl[\frac{r}{k-r+1}+\Delta H(k,k-r)\Bigr]-
\frac{1}{2}\frac{r}{\lambda+r}-\frac{1}{2}\Delta H(\lambda+r,\lambda)
}{
\frac{1}{\sqrt{n}}+
1+\frac{1}{2}
\Bigl[\frac{r}{k-r+1}-\Delta H(k,k-r)\Bigr]-
\frac{1}{2}\frac{r}{\lambda+r}+\frac{1}{2}\Delta H(\lambda+r,\lambda)
}\Rightarrow
\label{equation:pos-exp}
\end{equation}
\end{small}

\[
PoS\geq
\lim_{n\rightarrow\infty}
\frac{
\frac{p\sqrt{n}-1/k}{p\sqrt{n}+2/k}+\frac{1}{2}
\Bigl[\frac{1-p-1/k}{p+2/k}+\gamma-1+\ln\frac{1}{p+1/k}\Bigr]-
\frac{1}{2}\frac{r}{\lambda+r}-\frac{1}{2}\Delta H(\lambda+r,\lambda)
}{
\frac{1}{\sqrt{n}}+
1+\frac{1}{2}
\Bigl[\frac{1-p}{p}+1-\gamma-\ln\frac{1}{p-1/k}\Bigr]-
\frac{1}{2}\frac{r}{\lambda+r}+\frac{1}{2}\Delta H(\lambda+r,\lambda)
}
\]

\noindent Notice that for large $n$, $\frac{r}{\lambda+r}\rightarrow 0$, because $r=(\sqrt{n})$, and $\lambda\geq n$. Furthermore it is $\Delta H(\lambda+r,\lambda)\rightarrow 0$, because $\Delta H(\lambda+r,\lambda)\leq \frac{r}{\lambda}$, with $\lambda\geq n$ and $r=O(\sqrt{n})$. Assuming that $p$ is a constant, the limits of numerator and denominator exist, and we can use them to calculate the limit of the fraction. Given $\gamma > 0.5$ the PoS lower bound is simplified to:

\[
PoS > \left(1/4+\frac{1}{2}
\Bigl(\frac{1}{p}-\ln p\Bigr)\right)
/\left(3/4+\frac{1}{2}
\Bigl(\frac{1}{p}+\ln p\Bigr)
\right)
\]

Numerical maximization over $p\in (0,1)$ yields $PoS>1.45$, for $p\simeq 0.18$. We also searched computationally for $r$ maximizing~(\ref{equation:pos-exp}), for increasing $n$. For $r=0.27\sqrt{n}$ we found the values appearing in the table~\ref{figure:metric-pos-lb}(b) , that indicate $PoS > 1.77$. One can verify that any configuration other than $s^*$ and $s$ is more expensive, by definition of distances of agents from $v$ and $v_l^*$.

\section{Approximate Strong Equilibria}
\label{section:apx-strong-equilibria}

Strong equilibria do not generally exist in the Facility Location game, even for unweighted agents on metric networks with uniform facility costs. We illustrate this by an example, over the network shown in Fig.~\ref{figure:strong}. Consider the {\em metric} case in Figure~\ref{figure:strong}. There are three unweighted agents $i=1..3$ situated on distinct nodes $u_i$ of the depicted 6-node cycle. For any equilibrium strategy profile $s$ there exist two agents $i,j$, with $j = (i+1) \bmod 3$ with $c_i(s) \geq 1$, $c_j(s)\geq \frac{8}{9}$. By agreeing to open a facility on vertex $v_i$, $i$ and $j$ would change their costs to $c_i(s') = \frac{8}{9}$ and $c_j(s') = \frac{7}{9}$ respectively, each lowering their cost by at least $\frac{1}{9}$. Hence, this example does not have strong equilibria.

\begin{figure}[t]
\center
\includegraphics[width=5cm]{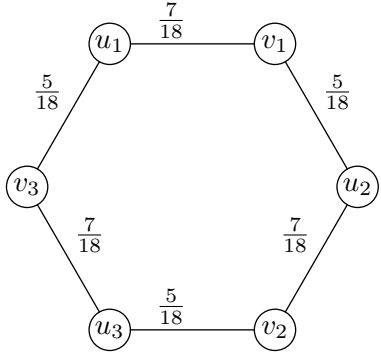}
\caption{Example of non-existence of strong equilibria.}
\label{figure:strong}
\end{figure}

Existence of pure equilibria for weighted agents is also an open issue. However, we were able to reduce to a constant  the logarithmic approximation factor $\alpha$ known for general network design~\cite{Albers08,Chen06}. In fact, our result is even more general, as it concerns {\em strong} equilibria. We make use of the following remark.

\begin{remark}
If an instance of the Facility Location game does not have strong equilibria, then there is at least one cycle of deviations of particular coalitions that results in a circular sequence of configurations $\{s^j\}_{j=1}^k$ with $s^1=s^k$. 
\label{cyclic-remark}
\end{remark}

Given such a sequence $\{s^j\}_{j=1}^k$, we denote the coalition that deviates from $s^j$ to form $s^{j+1}$ by $I_j$. Such a deviation causes a cost decrease of agents in $I_{j}$ and possibly a cost increase of agents in $A\setminus I_{j}$. Recall that $A$ is the set of all agents. We define two quantities, the {\em weighted improvement} $\impr(I_j)$ for agents in $I_{j}$ and the {\em weighted damage} $\dam(I_j)$ caused by agents in $I_j$ respectively:

\[
\impr(I_j) = \prod_{i \in I_{j}}\left(\frac{c_i(s^{j})}{c_i(s^{j+1})}\right)^{w_i}
\mbox{\ \ \ \ }
\dam(I_j) = \prod_{i \in A \setminus I_{j}}\left(\frac{c_i(s^{j+1})}{c_i(s^j)}\right)^{w_i}
\]

\noindent We derive an approximation factor that eliminates cycles.

\begin{lemma}
\label{lemma:damage}
Let $\{s^j\}_{j=1}^k$ with $s^1=s^k$ be a cycle of configurations in a Facility Location game instance, caused by consecutive deviations of coalitions. The game instance has an $\alpha$-approximate strong equilibrium if for all such sequences $\alpha \geq \dam_{\max}(\{s^j\}_{j=1}^k)$, where
$\dam_{\max}(\{s^j\}_{j=1}^k) = \displaystyle\max_{j=1\dots k-1}\dam(I_j)^{1/W(I_j)}$.
\end{lemma}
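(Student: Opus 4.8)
The plan is to show that if the given approximation factor $\alpha$ is at least $\dam_{\max}$ over all deviation cycles, then no such cycle can exist, which by Remark~\ref{cyclic-remark} forces existence of an $\alpha$-approximate strong equilibrium. The natural strategy is a product-telescoping argument around a single cycle $\{s^j\}_{j=1}^k$ with $s^1=s^k$. For each transition $s^j\to s^{j+1}$ caused by coalition $I_j$, consider the total product of individual costs raised to their weights, $P_j=\prod_{i\in A}c_i(s^j)^{w_i}$. The key observation is that this product is a global potential: its change across one transition factors exactly into the improvement of the deviating coalition and the damage to everyone else, namely $P_{j+1}/P_j=\dam(I_j)/\impr(I_j)$. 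Since the sequence closes up ($s^1=s^k$), taking the product of these ratios around the whole cycle telescopes to $1$, giving $\prod_{j=1}^{k-1}\dam(I_j)=\prod_{j=1}^{k-1}\impr(I_j)$.

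\textbf{First} I would make precise what ``$\alpha$-approximate deviation'' means in the product formulation. A coalition $I_j$ performs a valid deviation in the $\alpha$-approximate game only if \emph{every} member strictly improves by a factor more than $\alpha$, i.e. $c_i(s^j)/c_i(s^{j+1})>\alpha$ for all $i\in I_j$. Raising to the weight $w_i$ and multiplying over the coalition gives $\impr(I_j)>\alpha^{W(I_j)}$, equivalently $\impr(I_j)^{1/W(I_j)}>\alpha$. This converts the per-agent approximation guarantee into a clean lower bound on the geometric-mean improvement of each deviating step.

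\textbf{Next}, I combine the telescoping identity with this lower bound. If a cycle consisting only of valid $\alpha$-improving deviations existed, then for each $j$ we would have $\impr(I_j)>\alpha^{W(I_j)}$, whereas by hypothesis $\alpha\geq\dam(I_j)^{1/W(I_j)}$ gives $\dam(I_j)\leq\alpha^{W(I_j)}$. Hence $\dam(I_j)<\impr(I_j)$ for every transition, so $\prod_j\dam(I_j)<\prod_j\impr(I_j)$, contradicting the telescoped equality $\prod_j\dam(I_j)=\prod_j\impr(I_j)$. Therefore no $\alpha$-improving cycle can exist, and the game admits an $\alpha$-approximate strong equilibrium.

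\textbf{The hard part} will be justifying the telescoping step rigorously, specifically that $P_{j+1}/P_j$ factors \emph{exactly} as $\dam(I_j)/\impr(I_j)$ with no leftover terms. This requires checking that the agents split cleanly into the deviating set $I_j$ (contributing the reciprocal of $\impr$) and its complement $A\setminus I_j$ (contributing $\dam$), so that every agent's factor $c_i(s^{j+1})^{w_i}/c_i(s^j)^{w_i}$ is accounted for precisely once; this is immediate from the definitions, since $\impr(I_j)=\prod_{i\in I_j}(c_i(s^j)/c_i(s^{j+1}))^{w_i}$ inverts to exactly the $I_j$-part of $P_{j+1}/P_j$. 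A secondary subtlety is ensuring all individual costs are strictly positive so the products and their logarithms are well-defined; one must argue that a deviating agent never drops to zero cost in a way that breaks the ratios, which holds because connection and shared-facility costs keep $c_i$ positive whenever $\beta_{s_i}>0$, and degenerate zero-cost cases can be handled separately as they cannot participate in a strict improvement cycle.
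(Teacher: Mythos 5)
Your proof is correct and takes essentially the same route as the paper's: your potential $P_j=\prod_{i\in A}c_i(s^j)^{w_i}$ is just a repackaging of the paper's telescoping identity $\prod_{i}\bigl(\prod_{j=1}^{k-1}c_i(s^j)/c_i(s^{j+1})\bigr)^{w_i}=1$ around the cycle, combined with the same bounds $\impr(I_j)>\alpha^{W(I_j)}\geq \dam(I_j)$ to reach the same contradiction. Your closing remark on strict positivity of the costs patches a degeneracy the paper leaves implicit, but this is a cosmetic rather than structural difference.
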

\begin{proof}
If there is no $\alpha$-approximate strong equilibrium we know that there is at least one cycle $\{s^j\}_{j=1}^k$ such that $\forall j \in \{1,\dots,k-1\} \forall i \in I_j: \frac{c_i(s^{j})}{c_i(s^{j+1})} > \alpha$.

\noindent Because $s^1=s^k$ we have that $\prod_{j=1}^{k-1} \frac{c_i(s^{j})}{c_i(s^{j+1})}=1$ for every agent $i$. Then:

\[
1 = \prod_{i=1}^n \left( \prod_{j=1}^{k-1} \frac{c_i(s^{j})}{c_i(s^{j+1})}\right)^{w_i} = 
\prod_{j=1}^{k-1} \frac{\impr(I_j)}{\dam(I_j)} >
\prod_{j=1}^{k-1} \frac{\alpha^{W(I_j)}}{\Bigl(\dam(I_j)^{1/W(I_j)}\Bigr)^{W(I_j)}} 
\]

\noindent It follows that $\dam_{\max}(\{s^j\}_{j=1}^k) > \alpha$. Hence the lemma follows by contradiction.\qed
\end{proof}

\bigskip

\noindent We derive an approximation factor as an upper bound of $\dam_{\max}(\{s^j\}_{j=1}^k)$ for any cycle.

\begin{theorem}
For every $\alpha\geq e$ there exist $\alpha$-approximate strong equilibria in the Facility Location game with fairly allocated facility costs, even for weighted agents and general networks.
\end{theorem}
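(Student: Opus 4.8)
The plan is to reduce everything to Lemma~\ref{lemma:damage}. That lemma guarantees an $\alpha$-approximate strong equilibrium whenever $\alpha\geq\dam_{\max}(\{s^j\}_{j=1}^k)$ holds for \emph{every} deviation cycle, so it suffices to establish the single-step estimate $\dam(I_j)^{1/W(I_j)}\leq e$, i.e. $\dam(I_j)\leq e^{W(I_j)}$, for each coalitional deviation $s^j\to s^{j+1}$. Once this is shown, every $\alpha\geq e$ satisfies the hypothesis of the lemma and the theorem follows.

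First I would decompose $\dam(I_j)$ facility by facility. An agent $i\in A\setminus I_j$ does not move, so its connection cost $w_id(u_i,s_i)$ is unchanged and only its fair share of the facility cost at $v=s_i$ varies, through the change of total weight from $W_{s^j}(v)$ to $W_{s^{j+1}}(v)$. On facilities that gain weight or keep it unchanged, each such agent pays no more than before, contributing a factor at most $1$; these may be dropped from an upper bound on $\dam(I_j)$, leaving only the facilities that \emph{lose} weight. On such a facility $v$ the share strictly increases, and since the ratio $(d(u_i,v)+\beta_v/W_{s^{j+1}}(v))/(d(u_i,v)+\beta_v/W_{s^j}(v))$ is largest at $d(u_i,v)=0$, I obtain for each surviving non-coalition agent $i$ at $v$
\[
\frac{c_i(s^{j+1})}{c_i(s^j)}\leq\frac{W_{s^j}(v)}{W_{s^{j+1}}(v)}.
\]
Writing $B_v$ for the non-coalition agents at $v$ and $b_v=W(B_v)$, facility $v$ then contributes at most $\bigl(W_{s^j}(v)/W_{s^{j+1}}(v)\bigr)^{b_v}$ to $\dam(I_j)$.

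Next I would bound each losing term using the elementary inequality $-(1-t)\ln(1-t)\leq t$ for $t\in[0,1)$, which holds because $g(t)=t+(1-t)\ln(1-t)$ satisfies $g(0)=0$ and $g'(t)=-\ln(1-t)\geq 0$. Put $M_v=W_{s^j}(v)$ and let $\delta_v=W_{s^j}(v)-W_{s^{j+1}}(v)>0$ be the weight lost at $v$. Since all of $B_v$ survives at $v$, we have $b_v\leq W_{s^{j+1}}(v)=M_v-\delta_v$, and hence, with $t=\delta_v/M_v$,
\[
b_v\ln\frac{W_{s^j}(v)}{W_{s^{j+1}}(v)}\leq(M_v-\delta_v)\ln\frac{M_v}{M_v-\delta_v}=-M_v(1-t)\ln(1-t)\leq M_vt=\delta_v.
\]
Summing over the losing facilities gives $\ln\dam(I_j)\leq\sum_{v\text{ losing}}\delta_v$.

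Finally I would observe that the total weight lost across losing facilities cannot exceed the weight of the deviating coalition: each agent of $I_j$ departs at most one facility, so the gross outflow summed over all facilities is at most $W(I_j)$, and the net loss $\delta_v$ at any facility is bounded by its gross outflow. Thus $\sum_{v\text{ losing}}\delta_v\leq W(I_j)$, yielding $\ln\dam(I_j)\leq W(I_j)$, i.e. $\dam(I_j)\leq e^{W(I_j)}$, which is exactly the single-step bound and, via Lemma~\ref{lemma:damage}, the theorem. The only mildly delicate points are the bookkeeping observation $b_v\leq M_v-\delta_v$ (which lets the per-facility damage be charged against the coalition's own weight) and the net-outflow accounting $\sum\delta_v\leq W(I_j)$; the analytic estimate on $-(1-t)\ln(1-t)$ is routine.
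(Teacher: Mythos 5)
Your proposal is correct and follows essentially the same route as the paper: reduce to Lemma~\ref{lemma:damage}, decompose the damage facility by facility keeping only non-movers at weight-losing facilities, take the $d(u_i,v)=0$ worst case, and apply the $(1+1/r)^r<e$ estimate (your inequality $-(1-t)\ln(1-t)\leq t$ is that same bound in logarithmic form, since $b_v\ln\frac{M_v}{M_v-\delta_v}\leq\delta_v$ is exactly per-facility damage $\leq e^{\delta_v}$). The only difference is bookkeeping at the end — you sum logarithms and charge net outflow via $\sum_v\delta_v\leq W(I_j)$, whereas the paper normalizes each facility's damage by $W(I_j(v))$ and passes to the worst-case node — and your variant is, if anything, marginally sharper since $\delta_v$ accounts for inflow.
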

\begin{proof}
We  prove that $\dam_{\max}(\{s^j\}_{j=1}^k) < e$ for every cycle $\{s^j\}_{j=1}^k$ of configurations and the result follows from Lemma~\ref{lemma:damage}. Let $I_j(v)$ be the set of agents going to $v$ in $s^j$, but not in $s^{j+1}$, and $A_j(v)$ be the set of agents going to $v$ in both $s^j$ and $s^{j+1}$. Note that $I_j = \bigcup_{v \in V} I_j(v)$, therefore:

\[
\dam_{\max}(\{s^j\}_{j=1}^k) =
\max_{j} \left(
\prod_{v \in V} \left(\prod_{i \in A_j(v)} \left(\frac{c_i(s^{j+1})}{c_i(s^j)}\right)^{w_i}
\right)^{\frac{W(I_j(v))}{W(I_j(v))}}\right)^{\frac{1}{W(I_j)}}\Rightarrow
\]

\[
\dam_{\max}(\{s^j\}_{j=1}^k) \leq
\max_{j,v} \left(\prod_{i \in A_j(v)}
  \left(\frac{c_i(s^{j+1})}{c_i(s^j)}\right)^{w_i}
  \right)^{\frac{1}{W(I_j(v))}}
\]

\noindent Hence, we need only consider what happens at the worst case node. For an agent $i$ in $A_j(v)$ we get that:

\[
\frac{c_i(s^{j+1})}{c_i(s^j)} = \frac{w_i \left(
  d(u_i,v) +
  \frac{\beta_v}{W_{s^{j+1}}(v)}\right)
  }{w_i \left(
  d(u_i,v) +
  \frac{\beta_v}{W(I_j(v)) + W(A_j(v))}\right)
  }
\leq
1+\frac{W(I_j(v))}{W(A_j(v))}
\]

\noindent It follows that:
\[
\dam_{\max}(\{s^j\}_{j=1}^k) \leq 
\max_{j,v} 
\left(
1 + \frac{W(I_j(v))}{W(A_j(v))}
\right)^{\frac{W(A_j(v))}{W(I_j(v))}}<
\lim_{r \to \infty} \left(1 + \frac{1}{r}\right)^r = e
\]
\qed
\end{proof}

\bigskip

\noindent Approximate strong equilibria are also approximate pure Nash equilibria, thus:

\begin{corollary}
The Facility Location game with fairly allocated facility costs and weighted agents has $\alpha$-approximate pure strategy Nash equilibria for every $\alpha\geq e$.
\end{corollary}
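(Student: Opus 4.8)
The plan is to obtain the corollary as an immediate specialization of the preceding theorem, exploiting the fact that the class of pure Nash equilibria sits inside the class of strong equilibria: a unilateral (pure) deviation is precisely a coalitional deviation by a singleton coalition. Concretely, recall from the Definition of strong equilibria that a profile $s$ is an $\alpha$-approximate strong equilibrium exactly when no subset $I\subseteq A$ of agents can jointly switch to another pure strategy profile so that each member's cost drops by a factor strictly larger than $\alpha$. Taking as the working definition of an $\alpha$-approximate pure Nash equilibrium the condition that no single agent $i$ can deviate to some $s_i'$ with $c_i(s)/c_i(s_i',s_{-i}) > \alpha$, I would first observe that restricting the strong-equilibrium quantifier to singleton coalitions $I=\{i\}$ reproduces this condition verbatim.

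Thus I would argue the inclusion directly: if $s$ is an $\alpha$-approximate strong equilibrium, then in particular the coalitional improvement condition fails for every singleton $\{i\}$, so no single agent can improve its cost by a factor exceeding $\alpha$, i.e.\ $s$ is an $\alpha$-approximate pure Nash equilibrium. The corollary then follows by invoking the preceding theorem, which guarantees that for every $\alpha\geq e$ the game admits an $\alpha$-approximate strong equilibrium; any such profile is the desired $\alpha$-approximate pure Nash equilibrium.

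The only point warranting care, and what I would treat as the (mild) crux, is checking that the ``strictly more than $\alpha$'' threshold is consistent across the two notions, so that an $\alpha$-approximate strong equilibrium really defeats every profitable unilateral deviation at the \emph{same} threshold $\alpha$ rather than at a shifted one. Since both definitions use the identical per-agent improvement ratio $c_i(s)/c_i(s')$ together with the identical strict inequality, the match is exact and no loss in $\alpha$ is incurred. I would also note that the argument uses no metric or unweighting hypotheses, so the statement holds at the full generality (weighted agents, general networks) of the theorem it specializes.
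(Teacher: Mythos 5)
Your proposal is correct and matches the paper's argument exactly: the paper derives the corollary in one line from the preceding existence theorem via the observation that approximate strong equilibria are also approximate pure Nash equilibria, which is precisely your singleton-coalition specialization. Your additional check that the strict-inequality threshold $\alpha$ is identical in both definitions is a sound (if implicit in the paper) verification, not a deviation.
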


\section{The Strong Price of Anarchy}
\label{section:strong-poa}

\noindent We derive next an upper bound on the SPoA of $\alpha$-approximate strong equilibria, for the general Facility Location game.

\begin{theorem}
\label{theorem:spoa-ub}
For any constant $\alpha\geq e$, the Price of Anarchy of $\alpha$-approximate strong equilibria for the Facility Location game with fairly allocated facility costs, is upper bounded tightly by $O(H(n))$ for unweighted and by $O(\ln W)$ for weighted agents, where $W$ is the sum of weights.
\end{theorem}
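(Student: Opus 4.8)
The plan is to bound the Strong Price of Anarchy of an $\alpha$-approximate strong equilibrium $s$ against the social optimum $s^*$ by a coalition-deviation argument, in the spirit of the network-design SPoA analyses. Let $s$ be any $\alpha$-approximate strong equilibrium and let $s^*$ be the socially optimum profile. I would consider the grand coalition (or a carefully chosen nested sequence of coalitions) and use the approximate-strong-equilibrium condition: for any coalition $I$ that could jointly deviate, at least one agent $i\in I$ fails to improve by a factor more than $\alpha$, i.e. $c_i(s)\le \alpha\,c_i(s')$ for the deviation target $s'$. The key is to engineer the deviation target cleverly from $s^*$ so that the bound accumulates into a harmonic/logarithmic factor.

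\textbf{Key steps.} First I would set up an ordering argument: starting from the full set of agents, repeatedly peel off the agent whose approximate-equilibrium guarantee is tightest, building a nested chain $A=I_1\supseteq I_2\supseteq\cdots$. At each stage $t$, consider the deviation in which every agent in $I_t$ simultaneously moves to its optimal location $s^*_i$; this is a legitimate coalitional move, so some agent $i_t\in I_t$ must satisfy $c_{i_t}(s)\le \alpha\, c_{i_t}(s'_t)$, where $s'_t$ is the profile after that deviation. Second, I would bound $c_{i_t}(s'_t)$ in terms of optimum quantities: the connection cost $w_{i_t}d(u_{i_t},s^*_{i_t})$ is charged directly to $s^*$, while the shared facility cost that $i_t$ pays in $s'_t$ is at most $w_{i_t}\beta_{s^*_{i_t}}/W_{s'_t}(s^*_{i_t})$. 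The crux is that, because the remaining agents $I_t$ all co-locate at their optimal facilities, the denominator $W_{s'_t}(s^*_{i_t})$ is large, so $i_t$'s fair share is small. Third, summing $c_{i_t}(s)$ over the peeling order $t=1,\dots,n$ (or weighted by $w_{i_t}$) telescopes the facility-cost shares into a weighted harmonic sum: each facility $v\in F_{s^*}$ contributes $\beta_v$ times roughly $\sum_{k}1/W$-type terms, which is $O(H(n))$ in the unweighted case and $O(\ln W)$ in the weighted case.

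\textbf{Main obstacle.} The hard part is the weighted telescoping: ensuring that as agents are peeled off in the chosen order, the sum of their fair shares at each facility $v\in F_{s^*}$ is bounded by $\beta_v\cdot O(\ln W(v))$ rather than something larger. This requires peeling agents at each facility in order of \emph{increasing} accumulated weight, so that the $k$-th peeled agent (by weight) at $v$ pays roughly $w/W$ where $W$ is the residual weight still present — exactly the structure that yields $\int dW/W=\ln W$. One must verify that the global peeling order can be made consistent with the per-facility orders and that the connection-cost terms, charged to $s^*$, only help (they are nonnegative and bounded by the optimum's connection cost). Finally I would combine the bound $c(s)=\sum_t c_{i_t}(s)\le \alpha\big(\sum_i w_i d(u_i,s^*_i)+\sum_{v\in F_{s^*}}\beta_v\cdot O(\ln W)\big)=O(\ln W)\cdot c(s^*)$, and specialize to $w_i\equiv 1$, $W=n$ for the $O(H(n))$ unweighted bound. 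Tightness follows from the $\Omega(\ln n)$ lower-bound construction already exhibited for the PoS (Fig.~\ref{figure:general-pos-lb}), since that non-metric instance is in fact a strong equilibrium.
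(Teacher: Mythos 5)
Your proposal is correct and is essentially the paper's own argument: the paper likewise peels a non-improving agent from nested coalitions --- organized per optimal facility $v\in F_{s^*}$, with nested subsets $I_{s^*}^i(v)=\{1,\dots,i\}$ of misconnected agents deviating to $v$ itself --- and telescopes the resulting fair shares $w_i/W(I_{s^*}^i(v))$ into $H(n)$ for unweighted agents and, via $x\geq 1+\ln x$, into $\alpha(1+\ln W)$ for weighted ones. Note only that your stated ``main obstacle'' is vacuous, since the telescoping bound $w_i/W\leq\ln\bigl(W/(W-w_i)\bigr)$ holds for whatever order the adversarially determined non-improving agents are peeled in (no consistency between the global and per-facility orders is required), and that for tightness the paper exhibits a dedicated construction (Fig.~\ref{figure:apx-spoa-lb}) giving an $\alpha$-approximate strong equilibrium of cost $\alpha H(n)$, rather than reusing the PoS instance of Fig.~\ref{figure:general-pos-lb}, whose resilience to coalitional deviations you assert but do not verify.
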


\begin{proof}
Let $s$ and $s^*$ be the most expensive strong equilibrium and the socially optimum configuration respectively. For any facility node $v\in F_{s^*}$ let $A_{s^*}(v)$ and  $I_{s^*}(v)$ be respectively the subsets of agents connected to $v$ in $s^*$, and connected to $v$ in $s^*$ but not in $s$. We will upper bound the SPoA by $\max_{v\in F_{s^*}}\left[\left(\sum_{i\in A_{s^*}(v)}c_i(s)\right)/c_v(s^*)\right]$. Because $s$ is an $\alpha$-approximate strong equilibrium, {\em for every} subset of $I_{s^*}(v)$ there is at least one agent $i$, that is not willing to deviate to $v$ in coordination with the rest agents of the subset. Let  $I_{s^*}(v)=\{1,\dots,|I_{s^*}(v)|\}$ and $I_{s^*}^i(v)=\{1,\dots, i\}$, where $i$ is not willing to deviate in coordination with $I_{s^*}^i(v)\setminus\{i\}$. Then:

\[
c_i(s)\leq 
\alpha\Bigl(w_id(u_i, v)+\frac{w_i\beta_v}{W(I_{s^*}^i(v))+W_s(v)}\Bigr)
\leq
\alpha\Bigl(w_id(u_i, v)+\frac{w_i\beta_v}{W(I_{s^*}^i(v))}\Bigr)\Rightarrow\]

\[
\sum_{i\in I_{s^*}(v)}c_i(s)=\sum_{i\in I_{s^*}(v)}c_i(s)\leq 
\alpha\Bigl(\sum_{i\in I_{s^*}(v)}w_id(u_i,v)+
\beta_v\sum_{i=1}^{|I_{s^*}(v)|}\frac{w_i}{W(I_{s^*}^i(v))}\Bigr)
\Rightarrow
\]

\[
\sum_{i\in I_{s^*}(v)}c_i(s)\leq 
\alpha\left(\sum_{i=1}^{|I_{s^*}(v)|}\frac{w_i}{W(I_{s^*}^i(v))}\right)
\left(\sum_{i\in I_{s^*}(v)}w_id(u_i,v)+\beta_v\right)\Rightarrow
\]

\begin{equation}
\sum_{i\in I_{s^*}(v)}c_i(s)\leq
\alpha\left(\sum_{i=1}^{|I_{s^*}(v)|}\frac{w_i}{W(I_{s^*}^i(v))}\right)
c_{I_{s^*}(v)}(s^*)
\label{equation:weighted-harmonic}
\end{equation}

\noindent The result will follow from~(\ref{equation:weighted-harmonic}), because $I_{s^*}(v)\subseteq A_{s^*}(v)$, and agents in $A_{s^*}(v)\setminus I_{s^*}(v)$ play $v$ in $s$ and $s^*$. We analyze the SPoA ratio. For $w_i=1$ it is $W(I_{s^*}^i(v))=|I_{s^*}^i(v)|=i$, which yields $SPoA=O(\alpha H(n))$. For weighted agents set $W(I_{s^*}^0(v))=0$ and use $x\geq 1+\ln x$ for $x\in (0,1)$ on~(\ref{equation:weighted-harmonic}):

\[
\alpha\sum_{i=1}^{|I_{s^*}(v)|}\Bigl(1-\frac{W(I_{s^*}^{i-1}(v))}{W(I_{s^*}^i(v))}\Bigr)\leq
-\alpha\sum_{i=1}^{|I_{s^*}(v)|}\ln\frac{W(I_{s^*}^{i-1}(v))}{W(I_{s^*}^i(v))}=
\alpha(1+\ln W)
\]
\qed
\end{proof}

\bigskip

\noindent {\bf Unweighted Lower Bound} We refer to Fig.~\ref{figure:apx-spoa-lb} for a tight (non-metric) lower bounding example in case of unweighted agents. Facility opening costs are $1$. It is clear that all agents being connected to node $v_{eq}$ is the most expensive $\alpha$-approximate equilibrium, of social cost $\alpha H(n)$: no agent coalition has incentive to deviate to $v_{opt}$, which is practically the node that all agents reside on, and results in a social cost of virtually $1$, for $\epsilon\rightarrow 0$ (e.g. $\epsilon=n^{-2}$). 

\begin{figure}[t]
\center
\includegraphics[scale=0.5]{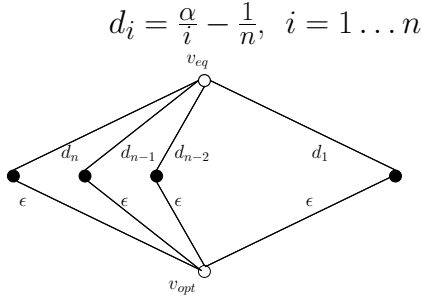}
\caption{Unweighted Non-metric Lower Bound for the SPoA of $\alpha$-approximate strong equilibria ($\alpha\geq e$).}
\label{figure:apx-spoa-lb}
\end{figure}

\subsection{Unweighted Agents on Metric Networks}
\label{subsection:metric-unweighted-spoa}

\noindent We prove the following for the SPoA of the unweighted metric Facility Location game:

\begin{theorem}
The Price of Anarchy of $e$-approximate strong equilibria in the  unweighted metric Facility Location game with fairly allocated facility costs is upper bounded by a constant.
\label{theorem:metric-spoa-ub}
\end{theorem}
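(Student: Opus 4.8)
The plan is to reproduce the two-sided per-facility estimate underlying the Price of Stability bound (Proposition~\ref{intermmediate-case} and Theorem~\ref{theorem:metric-pos-ub}), but with the best-response dynamics replaced by the stability conditions of an $e$-approximate strong equilibrium. As in~(\ref{equation:plan}), I first reduce the global ratio $c(s)/c(s^*)$ to $\max_{v\in F_{s^*}}\bigl(\sum_{i\in A_{s^*}(v)}c_i(s)\bigr)/c_v(s^*)$, so it suffices to bound, for each fixed $v\in F_{s^*}$, the total cost of the agents $A_{s^*}(v)$ by a constant multiple of $c_v(s^*)=\beta_v+\sum_{i\in A_{s^*}(v)}d(u_i,v)$. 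I split $A_{s^*}(v)$ into the agents $A_s(v)$ that remain at $v$ in $s$ and the deviators $I_{s^*}(v)=A_{s^*}(v)\setminus A_s(v)$.

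For the numerator I reuse the ordering argument of Theorem~\ref{theorem:spoa-ub}. Because $s$ is an $e$-approximate strong equilibrium, I order $I_{s^*}(v)=\{1,\dots,m\}$ so that agent $i$ is unwilling to rejoin $v$ in coalition with $\{1,\dots,i\}$; specializing the argument preceding~(\ref{equation:weighted-harmonic}) to unweighted agents with $\alpha=e$ gives $c_i(s)\leq e\bigl(d(u_i,v)+\beta_v/i\bigr)$, while each stayer contributes its exact cost. Summed, this produces an upper bound on $\Delta c(A_{s^*}(v))$ of the same harmonic shape as~(\ref{equation:delta-ub}), now carrying the factor $e$.

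For the denominator I redevelop the metric lower bound on $c_v(s^*)$ in the spirit of~(\ref{equation:connection-lb-1})--(\ref{equation:v-cost-lb}). The point is that a deviator $i$ sitting at $s_i\neq v$ must be far from $v$: I combine the single-agent $e$-stability condition $c_i(s)\leq e\bigl(d(u_i,v)+\beta_v/(W_s(v)+1)\bigr)$ for $i$ with the corresponding condition of a reference agent $j$ against moving to $s_i$, and substitute $d(u_j,s_i)\leq d(u_j,v)+d(u_i,v)+d(u_i,s_i)$ by the triangle inequality, obtaining a lower bound on $d(u_i,v)$ of the form $\frac{1}{2}\bigl(\mbox{reference share}-\beta_v/k_i\bigr)$, where $k_i$ is the relevant number of agents at $v$. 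Summing over $I_{s^*}(v)$ yields a bound for $c_v(s^*)$ analogous to~(\ref{equation:v-cost-lb}); dividing the numerator estimate by it and setting $y=|A_{s^*}(v)|/|A_s(v)|$ gives a one-variable expression as in~(\ref{equation:metric-pos-ub}) that I maximize numerically to a constant. The all-deviate case $A_s(v)=\emptyset$ is handled as in Theorem~\ref{theorem:metric-pos-ub}: take the largest $r$ with every deviator's post-move share at most $\beta_v/r$, observe that the cheap deviators ($k_i\leq r$) contribute essentially $\beta_v$ of new connection cost matching the closed facility, and thereby reduce to~(\ref{equation:v-cost-lb-1}).

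The main obstacle is the denominator step. In the Price of Stability proof the temporal best-response order supplied clean pairwise comparisons and the exact inequality $\delta x_i^*+\beta_{v'}/\lambda_i\leq\beta_v/k_i^*$; here there is no such order, and the $e$-approximation pollutes this gap with multiplicative $e$ factors, so the reference term and the subtracted share $\beta_v/k_i$ no longer combine as cleanly. The delicate part is therefore to arrange the $e$-stability conditions so that the triangle-inequality bound on $d(u_i,v)$ survives with a usable reference term, and then to verify that the resulting ratio---whose numerator and denominator both degrade as $\alpha$ grows---still maximizes to a genuine constant at $\alpha=e$.
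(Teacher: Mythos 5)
Your global decomposition (reducing the SPoA to $\max_{v\in F_{s^*}}\bigl(\sum_{i\in A_{s^*}(v)}c_i(s)\bigr)/c_v(s^*)$, splitting $A_{s^*}(v)$ into stayers $A_s(v)$ and deviators $I_{s^*}(v)$, and bounding the numerator by the ordering argument of theorem~\ref{theorem:spoa-ub}) matches the paper's proof of theorem~\ref{theorem:metric-spoa-ub}. But the denominator step, which you yourself flag as ``the main obstacle,'' is where the proposal has a genuine gap, and your sketch of how to fill it would fail. In the PoS analysis, the two-sided pairwise comparison~(\ref{equation:deviation-preference}) exists because the best-response \emph{dynamics} supply, for the agent $j$ who moves after $i$ (or never moves), a statement that $j$'s \emph{actual} cost is at most the cost of mimicking $i$'s move; combined with $i$'s improvement condition $\delta x_i^*+\beta_{v'}/\lambda_i\leq\beta_v/k_i^*$ this yields the lower bound~(\ref{equation:connection-lb-0}) on $d(u_i,v)$. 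At an $\alpha$-approximate strong equilibrium there is no temporal order, and \emph{every} stability condition you invoke---the single-agent condition $c_i(s)\leq e\bigl(d(u_i,v)+\beta_v/(W_s(v)+1)\bigr)$ and the reference agent's condition against moving to $s_i$---is an \emph{upper} bound on a cost. Two upper bounds fed through the triangle inequality cannot produce a lower bound on $d(u_i,v)$: you need some agent whose current cost is provably \emph{large}, so that its refusal to move near $v$ forces the deviators' optimal connection costs up. Your proposal never identifies such an agent, so the ``usable reference term'' you hope survives the triangle inequality has no source.

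The paper's missing idea is precisely this: it defines a \emph{minimal disagreeing subset} $I_{s^*}^0(v)\subseteq I_{s^*}(v)$ containing an \emph{unstable} agent $i$ that \emph{would} deviate to $v$ with $I_{s^*}^0(v)$ under $s_{-R}$. The fact that this coalitional deviation would be strictly profitable for $i$ converts the equilibrium hypothesis into a cost \emph{lower} bound, inequality~(\ref{equation:i-lb}): $c_i(s)>\alpha\bigl(x_i^*+\beta_v/(|I_{s^*}^0(v)|+|A_s(v)|)\bigr)$. This unstable agent then plays the role your ``reference agent'' cannot: since $i$ also has no incentive (within factor $\alpha$) to join the facility $s_j$ of any $j\in J_{s^*}(v)=I_{s^*}(v)\setminus I_{s^*}^0(v)$, combining~(\ref{equation:i-lb}) with~(\ref{equation:everyone-ub}) via the triangle inequality yields the lower bound~(\ref{equation:spoa-connection-lb}) on $\sum_{j\in J_{s^*}(v)}x_j^*$, which is the analogue of~(\ref{equation:v-cost-lb}) that your plan needs. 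Minimality of $I_{s^*}^0(v)$ separately caps the cost of its own members (lemma~\ref{lemma:rest-ub}, giving the $2\alpha$ term), and the case distinction you propose to import from theorem~\ref{theorem:metric-pos-ub} (the largest $r$ with all post-move shares at most $\beta_v/r$) is likewise dynamics-based and is replaced in the paper by the simple-case lemma (no member of $I_{s^*}(v)$ wants the full coalitional move, giving $(1+\alpha)c_v(s^*)$ directly) together with $r=|I_{s^*}^0(v)|+|A_s(v)|$. Without the minimal-disagreeing-subset construction, or some substitute mechanism extracting a cost lower bound from the equilibrium hypothesis, your denominator estimate does not go through and the ratio cannot be bounded by a constant.
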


The proof of theorem~\ref{theorem:metric-spoa-ub} consists of several partial results that we will synthesize. Let $s$ denote any $\alpha$-approximate strong equilibrium and $s^*$ the socially optimum cofiguration. We will upper bound the $SPoA$ again by $\max_{v\in F_{s^*}}\left(\sum_{i\in A_{s^*}(v)}c_i(s)\right)/c_v(s^*)\Bigr)$. For any facility node $v\in F_{s^*}$, define $A_{s}(v)\subseteq A_{s^*}(v)$ to be the subset of those agents that are connected to $v$ both in $s^*$ and $s$. Define $I_{s^*}(v)=A_{s^*}(v)\setminus A_{s}(v)$ to be the subset of agents that are connected to $v$ in $s^*$ but not in $s$. See fig.~\ref{figure:metric-spoa-setup} for an illustration of the definitions. To simplify notation we use $d(u_i,v)=x_i^*$, for $i\in A_{s^*}(v)$. At first we consider the following simple case:

\begin{figure}[t]
\center
\includegraphics[scale=0.7]{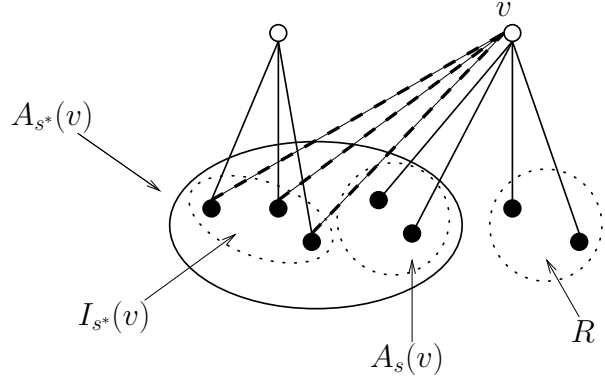}
\caption{The situation examined in the proof of theorem~\ref{theorem:metric-spoa-ub}.}
\label{figure:metric-spoa-setup}
\end{figure}

\begin{lemma}
Let $I_{s^*}(v)$ be a subset of misconnected agents under $\alpha$-approximate strong equilibrium profile $s$. Let $R$ denote agents connected to $v$ under $s$, with $R\subseteq A\setminus A_{s^*}(v)$. If  under $s_{-R}$ no agent of $I_{s^*}(v)$ has incentive to deviate to $v$ in coordination with $I_{s^*}(v)$, then $\sum_{i\in A_{s^*}(v)}c_i(s)\leq (1+\alpha) c_v(s^*)$.
\end{lemma}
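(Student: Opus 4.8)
The plan is to split $A_{s^*}(v)$ into the two groups already named in the setup and bound the $s$-cost of each group separately against the matching pieces of $c_v(s^*)=\sum_{i\in A_{s^*}(v)}x_i^*+\beta_v$. Write $A_{s^*}(v)=A_s(v)\cup I_{s^*}(v)$, where $A_s(v)$ are the agents who stay at $v$ in $s$ and $I_{s^*}(v)$ are the misconnected agents. Each of the two groups will contribute the whole facility cost $\beta_v$ at most once, the leavers' copy being inflated by $\alpha$; together these two $\beta_v$ terms are exactly what produces the factor $(1+\alpha)$. The conceptual crux is converting the coalition-level hypothesis into a clean per-agent inequality with denominator $|A_{s^*}(v)|$; once that is in place the rest is bookkeeping and a short comparison.

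For the stayers $i\in A_s(v)$ I would argue directly: they are connected to $v$ in $s$, so $c_i(s)=x_i^*+\beta_v/W_s(v)$. Since the agents sitting at $v$ in $s$ are exactly $A_s(v)\cup R$, we have $W_s(v)=|A_s(v)|+|R|\geq|A_s(v)|$, and summing over $A_s(v)$ gives $\sum_{i\in A_s(v)}c_i(s)\leq\sum_{i\in A_s(v)}x_i^*+\beta_v$, i.e. the facility cost is counted at most once.

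For the misconnected agents $i\in I_{s^*}(v)$ I would invoke the hypothesis. Removing $R$ leaves exactly $A_s(v)$ at $v$ in $s_{-R}$, so if the whole coalition $I_{s^*}(v)$ moves to $v$ the population at $v$ becomes $A_s(v)\cup I_{s^*}(v)=A_{s^*}(v)$, and each such agent would then pay $x_i^*+\beta_v/|A_{s^*}(v)|$. Because the agents of $I_{s^*}(v)$ are disjoint from $R$ and reside at facilities other than $v$ in $s$, their cost under $s_{-R}$ is unchanged from $c_i(s)$; the assumption that this coordinated deviation gives no member an $\alpha$-factor improvement therefore reads $c_i(s)\leq\alpha\bigl(x_i^*+\beta_v/|A_{s^*}(v)|\bigr)$ for every $i\in I_{s^*}(v)$. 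Summing and using $|I_{s^*}(v)|\leq|A_{s^*}(v)|$ yields $\sum_{i\in I_{s^*}(v)}c_i(s)\leq\alpha\bigl(\sum_{i\in I_{s^*}(v)}x_i^*+\beta_v\bigr)$.

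Finally I would add the two bounds and compare with $c_v(s^*)$. Writing $X_A=\sum_{i\in A_s(v)}x_i^*$ and $X_I=\sum_{i\in I_{s^*}(v)}x_i^*$, the total is at most $X_A+\beta_v+\alpha(X_I+\beta_v)$, whereas $(1+\alpha)c_v(s^*)=(1+\alpha)(X_A+X_I+\beta_v)$; their difference is $\alpha X_A+X_I\geq 0$, which closes the argument. The only genuinely delicate point is the correct identification of the post-deviation share as $\beta_v/|A_{s^*}(v)|$ — this is precisely what deleting $R$ buys us, since it restores the $v$-population to the optimum set $A_{s^*}(v)$ and makes the inflated facility term align cleanly with $c_v(s^*)$. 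Everything else is routine.
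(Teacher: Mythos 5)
Your proposal is correct and follows essentially the same route as the paper: split $A_{s^*}(v)$ into stayers $A_s(v)$ (whose shared facility payment sums to at most $\beta_v$) and leavers $I_{s^*}(v)$, read the hypothesis as the per-agent bound $c_i(s)\leq\alpha\bigl(x_i^*+\beta_v/|A_{s^*}(v)|\bigr)=\alpha c_i(s^*)$, and add the two sums against $(1+\alpha)\bigl(\beta_v+\sum_{i\in A_{s^*}(v)}x_i^*\bigr)$. The paper states the key step more tersely as $c_i(s)\leq\alpha c_i(s^*)$, but the content — including the observation that deleting $R$ makes the post-deviation population exactly $A_{s^*}(v)$ — is identical.
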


\begin{proof}
If under $s_{-R}$ no agent $i\in I_{s^*}(v)$ has incentive to deviate in coordination with $I_{s^*}(v)\setminus\{i\}$ to $v$, then for every $i\in I_{s^*}(v)$ it is $c_i(s)\leq \alpha c_i(s^*)$, because $s$ is an $\alpha$-approximate strong equilibrium. Then:

\begin{equationarray}{lcl}
\sum_{i\in A_{s^*}(v)}c_i(s) & = &
\sum_{i\in A_s(v)}c_i(s)+\sum_{i\in I_{s^*}(v)}c_i(s)\nonumber\\
& \leq &
\beta_v+\sum_{i\in A_s(v)}x_i^*+
\alpha\left(
\beta_v+\sum_{i\in I_{s^*}(v)}x_i^*
\right)\nonumber\\
& \leq &
(1+\alpha)\left(\beta_v+\sum_{i\in A_{s^*}(v)}x_i^*\right)\nonumber
\end{equationarray}

\noindent The latter equals $(1+\alpha)c_v(s^*)$.\qed
\end{proof}

\bigskip

For the rest of the analysis we treat the complementary case of that described in the previous lemma. Assume there exists at least one agent $i\in I_{s^*}(v)$ willing to deviate to $v$ in coordination with the coalition $I_{s^*}(v)$ under $s_{-R}$. Define a {\em minimal disagreeing subset} $I_{s^*}^0(v)\subseteq I_{s^*}(v)$ to be a {\em minimal subset of misconnected agents} containing an agent $i$ that would actually deviate to $v$ with $I_{s^*}^0(v)$, under $s_{-R}$. Then also define $J_{s^*}(v)=I_{s^*}(v)\setminus I_{s^*}^0(v)$. Fix $i\in I_{s^*}^0(v)$ to be from now on the agent (or one of them if there are many) that would deviate to $v$ in coordination with $I_{s^*}^0(v)$. We call $i$ the {\em unstable} agent of the minimal disagreeing subset $I_{s^*}^0(v)$. By definition, the following holds for an {\em unstable} agent $i$ of the minimal disagreeing subset:

\begin{equation}
c_i(s) > \alpha\left(x_i^*+\frac{\beta_v}{|I_{s^*}^0(v)|+|A_s(v)|}\right)
\label{equation:i-lb}
\end{equation}

\noindent The rest of the analysis consists of bounds for agents in $I_{s^*}^0$ and $J_{s^*}(v)$ separately. In particular, lemmas~\ref{lemma:rest-ub} and~\ref{lemma:dev-ub} describe upper bounds for these sets respectively.

\begin{lemma}
Let $I_{s^*}(v)$ be a subset of misconnected agents under an $\alpha$-approximate strong equilibrium profile $s$. Let $I_{s^*}^0(v)$ be a minimal disagreeing subset of $I_{s^*}(v)$ and $i\in I_{s^*}^0(v)$ an unstable agent. Then:

\begin{equation}
\sum_{l\in I_{s^*}^0(v)}c_l(s)\leq 2\alpha\Bigl(
\beta_v+\sum_{l\in I_{s^*}^0(v)}x_l^*
\Bigr)
\label{equation:rest-ub}
\end{equation}
\label{lemma:rest-ub}
\end{lemma}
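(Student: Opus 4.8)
The plan is to exploit the minimality of $I_{s^*}^0(v)$ so as to obtain, for \emph{every} member $l\in I_{s^*}^0(v)$ and not just for the unstable agent $i$, an upper bound on the equilibrium cost $c_l(s)$, and then to sum these bounds. Write $a=|A_s(v)|$ and $m=|I_{s^*}^0(v)|$, so that the unstable agent satisfies~(\ref{equation:i-lb}), namely $c_i(s)>\alpha\bigl(x_i^*+\frac{\beta_v}{m+a}\bigr)$; recall $x_l^*=d(u_l,v)$.

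First I would treat the principal case $m\ge 2$. Fix any $l\in I_{s^*}^0(v)$ and pick some $l'\in I_{s^*}^0(v)\setminus\{l\}$, which exists precisely because $m\ge 2$. The set $I_{s^*}^0(v)\setminus\{l'\}$ is a \emph{proper} subset of the minimal disagreeing subset, so by minimality it is not itself disagreeing under $s_{-R}$: no agent it contains --- in particular $l$ --- would improve by a factor exceeding $\alpha$ upon deviating to $v$ jointly with $I_{s^*}^0(v)\setminus\{l'\}$. Since this deviation (with the agents of $R$ absent) leaves $a+m-1$ agents at $v$, the cost $l$ would pay there is $x_l^*+\frac{\beta_v}{a+m-1}$, and therefore $c_l(s)\le \alpha\bigl(x_l^*+\frac{\beta_v}{a+m-1}\bigr)$. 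The key point is that this bound holds \emph{uniformly} over all $l\in I_{s^*}^0(v)$.

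Summing the $m$ inequalities gives $\sum_{l\in I_{s^*}^0(v)}c_l(s)\le \alpha\sum_{l\in I_{s^*}^0(v)}x_l^*+\alpha\cdot\frac{m\,\beta_v}{a+m-1}$. The remaining arithmetic is the elementary estimate $\frac{m}{a+m-1}\le 2$, valid for all $a\ge 0$ and $m\ge 2$ (the extreme being $a=0$, $m=2$), which yields $\sum_{l}c_l(s)\le \alpha\sum_l x_l^*+2\alpha\beta_v\le 2\alpha\bigl(\beta_v+\sum_{l\in I_{s^*}^0(v)}x_l^*\bigr)$, exactly~(\ref{equation:rest-ub}).

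It then remains to dispose of the degenerate case $m=1$, where $I_{s^*}^0(v)=\{i\}$ and minimality supplies nothing (its only proper subset is empty). Here I would appeal directly to the full $\alpha$-approximate strong equilibrium applied to the singleton coalition $\{i\}$: relocating $i$ to $v$ in $s$ cannot lower its cost by a factor greater than $\alpha$, and after the move $v$ carries at least one agent, so the facility share is at most $\beta_v$; hence $c_i(s)\le \alpha(x_i^*+\beta_v)\le 2\alpha(\beta_v+x_i^*)$. (One can even observe that $m=1$ never arises: removing $R$ only raises the share at $v$, so the instability~(\ref{equation:i-lb}) under $s_{-R}$ would contradict the resilience of $s$ to the unilateral move of $i$ to $v$.) I expect the main obstacle to be keeping scrupulous track of which configuration --- the reduced $s_{-R}$ or the full $s$ --- each inequality refers to, and making the passage from minimality to the per-agent bound airtight; by comparison the factor-two estimate and the summation are routine.
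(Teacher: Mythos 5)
Your proof is correct and takes essentially the same route as the paper's: minimality of $I_{s^*}^0(v)$ gives the uniform per-agent bound $c_l(s)\le\alpha\bigl(x_l^*+\frac{\beta_v}{|I_{s^*}^0(v)|+|A_s(v)|-1}\bigr)$ for every $l\in I_{s^*}^0(v)$, and summing together with $\frac{m}{m+a-1}\le 2$ yields (\ref{equation:rest-ub}). Your explicit treatment of the degenerate case $|I_{s^*}^0(v)|=1$ --- including the observation that it cannot actually occur, since removing $R$ only raises the facility share at $v$ and so (\ref{equation:i-lb}) would contradict resilience of $s$ to $i$'s unilateral move --- is a detail the paper leaves implicit (its denominator would even degenerate at $m=1$, $a=0$), but it does not change the argument.
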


\begin{proof}
\noindent By minimality of $I_{s^*}^0(v)$, every agent $l\in I_{s^*}^0(v)$ is {\em not willing} to deviate to $v$ under $s_{-R}$ with a coalition of size $|I_{s^*}^0(v)|-1$. Thus for every $l\in I_{s^*}^0(v)$:

$$
c_l(s)\leq\alpha\left(x_l^*+\frac{\beta_v}{|I_{s^*}^0(v)|+|A_s(v)|-1}\right)
$$

\noindent Summing over $I_{s^*}^0(v)$ yields:

\[
\sum_{l\in I_{s^*}^0(v)}c_l(s)\leq\alpha\Bigl(
\sum_{l\in I_{s^*}^0(v)}x_l^*
+\frac{|I_{s^*}^0(v)|\beta_v}{|I_{s^*}^0(v)|+|A_s(v)|-1}
\Bigr)
\]

\noindent which is upper bounded by at most as in~(\ref{equation:rest-ub}).\qed
\end{proof}

\begin{lemma}
Let $I_{s^*}(v)$ be a subset of misconnected agents under an $\alpha$-approximate strong equilibrium profile $s$. Let $I_{s^*}^0(v)$ be a minimal disagreeing subset of $I_{s^*}(v)$ and $J_{s^*}(v)=I_{s^*}(v)\setminus I_{s^*}^0(v)$. Then:

\begin{equation}
\sum_{j\in J_{s^*}(v)}c_j(s)\leq\alpha\left[
\sum_{j\in J_{s^*}(v)}x_j^*+
\beta_v\Bigl(H(|A_{s^*}(v)|)-H(|I_{s^*}^0(v)|+|A_s(v)|)\Bigr)
\right]
\label{equation:dev-ub}
\end{equation}
\label{lemma:dev-ub}
\end{lemma}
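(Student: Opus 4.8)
The plan is to bound the $J_{s^*}(v)$ agents by a peeling argument over nested coalitions that all contain the minimal disagreeing subset $I_{s^*}^0(v)$, mirroring the prefix argument of Theorem~\ref{theorem:spoa-ub} but with the fair-share denominators shifted by $|I_{s^*}^0(v)|+|A_s(v)|$. Starting from the full coalition $I_{s^*}(v)=I_{s^*}^0(v)\cup J_{s^*}(v)$ deviating to $v$ under $s_{-R}$ --- so that $|A_{s^*}(v)|=|A_s(v)|+|I_{s^*}(v)|$ agents would sit at $v$ --- I would repeatedly strip off one agent at a time, keeping $I_{s^*}^0(v)$ (in particular the unstable agent) inside the coalition at every step, until the coalition has shrunk down to $I_{s^*}^0(v)$ itself. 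At the moment an agent $j$ is removed the coalition has some size $t$ with $|I_{s^*}^0(v)|< t\le |I_{s^*}(v)|$, and the fair share $j$ would pay at $v$ is $\beta_v/(|A_s(v)|+t)$; if $j$ is a certifying non-improver for that coalition, then since $s$ is an $\alpha$-approximate strong equilibrium $c_j(s)\le\alpha\bigl(x_j^*+\beta_v/(|A_s(v)|+t)\bigr)$, using $d(u_j,v)=x_j^*$.

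The engine that makes the peeling go through is equation~(\ref{equation:i-lb}) together with monotonicity of the fair share in the coalition size. Because the unstable agent $i\in I_{s^*}^0(v)$ is kept in every coalition and every such coalition has size at least $|I_{s^*}^0(v)|$, the share $i$ would pay is at most $\beta_v/(|A_s(v)|+|I_{s^*}^0(v)|)$, so by~(\ref{equation:i-lb}) agent $i$ strictly improves by more than $\alpha$ at every stage. Consequently $i$ is never a certifying non-improver, and --- because $s$ is an $\alpha$-approximate strong equilibrium, so no coalition deviating to $v$ can make all of its members improve by more than $\alpha$ --- at every stage there is a certifying non-improver different from $i$, which is the agent I strip off.

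Having removed one agent at each coalition size $t=|I_{s^*}(v)|,\,|I_{s^*}(v)|-1,\dots,|I_{s^*}^0(v)|+1$, I would sum the per-agent bounds; the connection terms add up to $\alpha\sum_{j\in J_{s^*}(v)}x_j^*$, while the share terms contribute $\alpha\beta_v\sum_{t=|I_{s^*}^0(v)|+1}^{|I_{s^*}(v)|}\frac{1}{|A_s(v)|+t}$, which telescopes to $\alpha\beta_v\bigl(H(|A_{s^*}(v)|)-H(|I_{s^*}^0(v)|+|A_s(v)|)\bigr)$ since $|A_s(v)|+|I_{s^*}(v)|=|A_{s^*}(v)|$. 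This is exactly the right-hand side of~(\ref{equation:dev-ub}).

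I expect the delicate point to be guaranteeing that the agent stripped off at each stage can be taken to lie in $J_{s^*}(v)$ and that the process stops exactly at $I_{s^*}^0(v)$, so that the removed agents are precisely $J_{s^*}(v)$ and the denominators run over exactly the range $|A_s(v)|+|I_{s^*}^0(v)|+1,\dots,|A_{s^*}(v)|$. This is where the minimality of $I_{s^*}^0(v)$ is indispensable: taking $I_{s^*}^0(v)$ to be a subset of smallest cardinality admitting a deviator, no coalition of size below $|I_{s^*}^0(v)|$ contains a willing agent, so the peeling cannot proceed past the core and the residual core of size $|I_{s^*}^0(v)|$ is exactly the set handled by Lemma~\ref{lemma:rest-ub}. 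A certifying non-improver could in principle be an agent of $I_{s^*}^0(v)$ rather than of $J_{s^*}(v)$; the clean way to dispose of this is to let the peeling itself define the partition --- $J_{s^*}(v)$ is the stripped set and $I_{s^*}^0(v)$ the min-cardinality residue --- which keeps both this lemma and Lemma~\ref{lemma:rest-ub} simultaneously valid.
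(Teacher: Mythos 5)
Your proof is correct and is essentially the paper's argument run in reverse: the paper grows nested supersets $I_{s^*}^j(v)=I_{s^*}^{j-1}(v)\cup\{j\}$ outward from the core and extracts the same per-agent bound $c_j(s)\le\alpha\bigl(x_j^*+\beta_v/(|I_{s^*}^j(v)|+|A_s(v)|)\bigr)$, while you peel the full coalition inward, and your re-partitioning step for a certifying non-improver lying in $I_{s^*}^0(v)\setminus\{i\}$ is precisely the paper's without-loss-of-generality label exchange, justified in both cases by the (cardinality) minimality of the disagreeing subset and by~(\ref{equation:i-lb}) guaranteeing the unstable agent is never the non-improver. The telescoping harmonic sum and the resulting bound~(\ref{equation:dev-ub}) coincide exactly.
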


\begin{proof}
\noindent 
Without loss of generality name agents $j\in J_{s^*}(v)$ by distinct indices $1,\dots, |J_{s^*}(v)|$ and define a series of supersets of $I_{s^*}^0(v)$, as follows: $I_{s^*}^j(v)=I_{s^*}^{j-1}(v)\cup\{j\}$. Because $s$ is an $\alpha$-approximate strong equilibrium, every set $I_{s^*}^j(v)$ contains an agent that is not willing to deviate to $v$ in coordination with $I_{s^*}^j(v)$. This agent is found either in $I_{s^*}^0(v)-\{i\}$ or in $I_{s^*}^j(v)\setminus I_{s^*}^0(v)$. We can assume without any loss of generality that for subset $I_{s^*}^j(v)$ this agent is $j$; otherwise we only need to exchange $j$ with some agent from $I_{s^*}^0(v)\setminus\{i\}$. One easily verifies that, by definition of a minimal disagreeing subset, such an exchange will not affect any of our previous results up to now. Then we have:

\begin{equation}
c_j(s)\leq \alpha\Bigl(x_j^*+\frac{\beta_v}{|I_{s^*}^j(v)|+|A_s(v)|+|R|}\Bigr)
\mbox{,\ \ }
j=1,\dots, |J_{s^*}(v)|\in J_{s^*}(v)
\label{equation:j-ub}
\end{equation}

\noindent We omit $|R|$ and sum the inequality over $j\in J_{s^*}(v)$. The result follows.\qed
\end{proof}

\bigskip

The following lemma will provide a lower bound for $\sum_{j\in J_{s^*}(v)}x_j^*$ appearing in~(\ref{equation:dev-ub}). Note that lemma~\ref{lemma:rest-ub} provides a concrete upper bound for the cost (under strategy profile $s$) of agents in $I_{s^*}(v)$, as a function of their connection cost in the socially optimum configuration $s^*$ and the corresponding facility cost $\beta_v$. This is not the case with lemma~\ref{lemma:dev-ub} for agents in $J_{s^*}(v)$; the upper bound~(\ref{equation:dev-ub}) involves socially optimum connection cost plus $\beta_v$ multiplied by additional terms of harmonic numbers. Thus we need to determine how low the socially optimum connection cost can be, in order to derive an upper bounding ratio for the $SPoA$.

\begin{lemma}
Let $I_{s^*}(v)$ be a subset of misconnected agents under $\alpha$-approximate strong equilibrium profile $s$. Let $I_{s^*}^0(v)$ be a minimal disagreeing subset of $I_{s^*}(v)$ and $J_{s^*}(v)=I_{s^*}(v)\setminus I_{s^*}^0(v)$. Then:

\begin{equation}
\sum_{j\in J_{s^*}(v)}x_j^*\geq
\frac{\beta_v}{1+\alpha}\Bigl(
\frac{|A_{s^*}(v)|-\lceil\alpha r\rceil}{r}-\alpha
\Bigl(H(|A_{s^*}(v)|)-H(\lceil\alpha r\rceil)
\Bigr)
\Bigr)
\mbox{,\ $r=|I_{s^*}^0(v)|+|A_s(v)|$}
\label{equation:spoa-connection-lb}
\end{equation}
\end{lemma}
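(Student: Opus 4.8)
The plan is to prove a per-agent lower bound
$x_j^*>\frac{1}{1+\alpha}\bigl(\frac{\beta_v}{r}-\frac{\alpha\beta_v}{r+j}\bigr)$
for each misconnected agent $j\in J_{s^*}(v)$, indexed $1,\dots,|J_{s^*}(v)|$ exactly as in the proof of lemma~\ref{lemma:dev-ub}, and then to sum over those $j$ for which the right-hand side is positive, recognising the resulting harmonic sum. The whole argument is the SPoA analogue of the triangle-inequality step that produced~(\ref{equation:connection-lb-1}) in proposition~\ref{intermmediate-case}: there the deviating agent's connection cost was bounded below; here the unstable agent plays the role of the pivot against which each $j$ is compared.

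First I would exploit the unstable agent $i\in I_{s^*}^0(v)$, for which~(\ref{equation:i-lb}) gives $c_i(s)>\alpha\bigl(x_i^*+\frac{\beta_v}{r}\bigr)$ with $r=|I_{s^*}^0(v)|+|A_s(v)|$. Let $v_i,v_j$ be the facilities that $i,j$ use in $s$; since both agents are misconnected, $v_i\neq v$ and $v_j\neq v$. The key move couples $i$ and $j$ via the single-agent deviation of $i$ to $v_j$. As $s$ is an $\alpha$-approximate strong equilibrium, the singleton $\{i\}$ cannot improve by more than $\alpha$, so $c_i(s)\leq\alpha\bigl(d(u_i,v_j)+\frac{\beta_{v_j}}{W_s(v_j)+1}\bigr)$. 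Bounding $\frac{\beta_{v_j}}{W_s(v_j)+1}\leq\frac{\beta_{v_j}}{W_s(v_j)}=c_j(s)-d(u_j,v_j)$ and applying the triangle inequality twice, $d(u_i,v_j)-d(u_j,v_j)\leq d(u_i,u_j)\leq x_i^*+x_j^*$, yields $c_i(s)\leq\alpha\bigl(x_i^*+x_j^*+c_j(s)\bigr)$. Combining this with~(\ref{equation:i-lb}) and cancelling $\alpha x_i^*$ gives the clean relation $\frac{\beta_v}{r}<x_j^*+c_j(s)$. I would handle the degenerate case $v_i=v_j$ separately: comparing $c_i(s)$ and $c_j(s)$ at their common facility gives the same relation directly (in fact with the stronger constant $\alpha$ multiplying $\frac{\beta_v}{r}$), so it is not binding.

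Next I would feed in the no-deviation bound already established. Agent $j$ is precisely the one refusing to join $v$ together with the coalition $I_{s^*}^j(v)$, which is~(\ref{equation:j-ub}); omitting $|R|$ and using $|I_{s^*}^j(v)|+|A_s(v)|=r+j$, it reads $c_j(s)\leq\alpha\bigl(x_j^*+\frac{\beta_v}{r+j}\bigr)$. Substituting this upper bound for $c_j(s)$ into $\frac{\beta_v}{r}<x_j^*+c_j(s)$ and solving for $x_j^*$ produces exactly the per-agent bound $x_j^*>\frac{1}{1+\alpha}\bigl(\frac{\beta_v}{r}-\frac{\alpha\beta_v}{r+j}\bigr)$.

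Finally I would sum. This right-hand side is positive exactly when $r+j>\alpha r$, i.e. for $r+j\geq\lceil\alpha r\rceil+1$; for the other agents I use only $x_j^*\geq0$, discarding nonnegative contributions. Writing $k=r+j$, which ranges over $\{\lceil\alpha r\rceil+1,\dots,|A_{s^*}(v)|\}$ for the retained terms (since $r+|J_{s^*}(v)|=|A_{s^*}(v)|$), the $\frac1r$ terms contribute $\frac{\beta_v}{1+\alpha}\cdot\frac{|A_{s^*}(v)|-\lceil\alpha r\rceil}{r}$ and the $\frac{\alpha}{k}$ terms contribute $-\frac{\alpha\beta_v}{1+\alpha}\bigl(H(|A_{s^*}(v)|)-H(\lceil\alpha r\rceil)\bigr)$, which is~(\ref{equation:spoa-connection-lb}). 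I expect the coupling in the second paragraph to be the main obstacle: it requires careful bookkeeping of the facility share ($W_s(v_j)+1$ versus $W_s(v_j)$), treatment of the colocated case $v_i=v_j$, and a check that the $s_{-R}$ convention used to obtain~(\ref{equation:i-lb}) is consistent with invoking the actual equilibrium condition for $i$'s singleton deviation (it is, because $v_i\neq v$ so removing $R$ changes neither $c_i(s)$ nor $W_s(v_j)$).
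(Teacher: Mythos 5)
Your proposal is correct and takes essentially the same route as the paper's proof: you play the unstable agent's lower bound~(\ref{equation:i-lb}) against $i$'s singleton deviation to $s_j$, chain through the triangle inequality and the bound~(\ref{equation:j-ub}) on $c_j(s)$, solve for $x_j^*$, and sum over the indices where the per-agent bound is positive, exactly as the paper does. Your explicit treatment of the colocated case $v_i=v_j$ (where the share $\beta_{s_j}/(1+\lambda_j)$ used in the deviation argument is not literally valid, but a direct comparison gives a stronger bound) is a small tightening of a point the paper glosses over, not a different argument.
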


\begin{proof}
Let $i$ be the fixed {\em unstable} agent of $I_{s^*}^0(v)$. Note that, under strategy profile $s$, $i$ does not have an incentive to join facility node $s_j$ for any $j\in J_{s^*}(v)$. Thus if $j$ pays for $s_j$ a share of $\frac{\beta_{s_j}}{\lambda_j}$ (that is, $s_j$ serves $\lambda_j$ agents in total in $s$):

\[
c_i(s)\leq
\alpha\left(
d(u_i, s_j)+\frac{\beta_{s_j}}{1+\lambda_j}\right)\leq
\alpha\left(
d(u_i, v)+d(u_j,v)+d(u_j,s_j)+\frac{\beta_{s_j}}{\lambda_j}\right)\Rightarrow
\]

\begin{equation}
c_i(s)\leq
\alpha\left(
x_i^*+x_j^*+c_j(s)
\right)
\leq
\alpha\left(
x_i^*+x_j^*+\alpha\left(
x_j^*+\frac{\beta_v}{|I_{s^*}^j(v)|+|A_s(v)|}\right)
\right)
\label{equation:everyone-ub}
\end{equation}

\noindent The latter inequality derives by usage of~(\ref{equation:j-ub}) for $c_j(s)$, and by safely omitting $|R|$. Using~(\ref{equation:everyone-ub}) and  the lower bound for $c_i(s)$ given in~(\ref{equation:i-lb}), we can solve for $x_j^*$. By the definition of $I_{s^*}^j(v)$ in lemma~\ref{lemma:dev-ub}, it is $|I_{s^*}^j(v)|=|I_{s^*}^0(v)|+j$, $j=1,\dots, |J_{s^*}(v)|$. Then we obtain:

\[
x_j^*\geq\max\Bigl\{0,
\frac{\beta_v}{1+\alpha}\Bigl(
\frac{1}{|I_{s^*}^0(v)|+|A_s(v)|}-
\frac{\alpha}{j+|I_{s^*}^0(v)|+|A_s(v)|}\Bigr)
\Bigr\}
\mbox{,\ \ }
j=1, \dots, |J_{s^*}(v)|
\]

\noindent Finally we sum up the latter bound over all $j$. Notice that $x_j^*$ becomes non-negative only when $j+|I_{s^*}^0(v)|+|A_s(v)|\geq \alpha (|I_{s^*}^0(v)|+|A_s(v)|)$. Since $j+|I_{s^*}^0(v)|+|A_s(v)|$ is an integral value, it turns out that $x_j^*$ becomes non-negative for those values of $j$ for which it is $j+|I_{s^*}^0(v)|+|A_s(v)|\geq \lceil\alpha (|I_{s^*}^0(v)|+|A_s(v)|)\rceil$. Then, by setting $r=|I_{s^*}^0(v)|+|A_s(v)|$, and by summing up over all $j$ we obtain the specified lower bound.\qed
\end{proof}

\paragraph{Proof of Theorem~\ref{theorem:metric-spoa-ub}}{
\noindent Now we put everything together. A {\em lower bound} on $c_v(s^*)$ is:

\[
c_v(s^*)\geq \beta_v+
\sum_{j\in J_{s^*}(v)}x_j^*+
\sum_{l\in I_{s^*}^0(v)}x_l^*+
\sum_{i\in A_s(v)}x_i^*
\]

\noindent Accordingly, we obtain the following {\em upper bound} on  $\sum_{i\in A_{s^*}(v)}c_i(s)$ by~(\ref{equation:rest-ub}):

\begin{equationarray}{lcl}
\sum_{i\in A_{s^*}(v)}c_i(s) 
& \leq &
\sum_{l\in I_{s^*}^0(v)}c_l(s)+
\sum_{j\in J_{s^*}(v)}c_j(s)+\sum_{i\in A_s(v)}c_i(s)\nonumber\\
& \leq &
2\alpha\Bigl(\sum_{l\in I_{s^*}^0(v)}x_l^*+\beta_v\Bigr)+
\sum_{j\in J_{s^*}(v)}c_j(s)+\sum_{i\in A_s(v)}c_i(s)\nonumber
\end{equationarray}

\noindent Using the latter bounds for $c_v(s^*)$ and $\sum_{i\in A_{s^*}(v)}c_i(s)$, we deduce:

\begin{equationarray}{lcr}
SPoA & \leq &
1+\Bigl(
2\alpha\beta_v+
2\alpha\sum_{l\in I_{s^*}^0(v)}x_l^*+
\sum_{j\in J_{s^*}(v)}c_j(s)
\Bigr)/\Bigl(
\beta_v+\sum_{l\in I_{s^*}^0(v)}x_l^*+\sum_{j\in J_{s^*}(v)}x_j^*
\Bigr)\nonumber\medskip\\
& \leq &
1+2\alpha\times\frac{\displaystyle
\beta_v+\sum_{j\in J_{s^*}(v)}x_j^*+\beta_v\left(H(|A_{s^*}(v)|)-H(r)\right)
}{\displaystyle
\beta_v+\sum_{j\in J_{s^*}(v)}x_j^*
}\nonumber\medskip\\
& = & 1+2\alpha +
2\alpha\times\frac{
\beta_v\left(H(|A_{s^*}(v)|)-H(r)\right)
}{\displaystyle
\beta_v+\sum_{j\in J_{s^*}(v)}x_j^*
}
\label{equation:metric-spoa-ub}
\end{equationarray}

\noindent The latter inequality emerged firstly by substitution from~(\ref{equation:dev-ub}), and secondly by removal of $\sum_{l\in I_{s^*}^0(v)}x_l^*$ from numerator and denominator. We maximize~(\ref{equation:metric-spoa-ub}) by taking the lower bound of the denominator, given in~(\ref{equation:spoa-connection-lb}) and obtain:

\begin{equation}
SPoA\leq 1+2\alpha+2\alpha\frac{
H(|A_{s^*}(v)|)-H(r)
}{
1+\frac{1}{1+\alpha}\Bigl(
\frac{|A_{s^*}(v)|-\lceil\alpha r\rceil}{r}-\alpha
\Bigl(
H(|A_{s^*}(v)|)-H(\lceil\alpha r\rceil)
\Bigr)
\Bigr)
}
\label{equation:analytic-metric-spoa-ub}
\end{equation}

\noindent Because $\lceil\alpha r\rceil\leq (\alpha+1)r$, and by using logarithmic bounds for the harmonic numbers:

\[
SPoA\leq 1+2\alpha+2\alpha\frac{
1-\gamma+\ln\frac{|A_{s^*}(v)|}{r}
}{
\frac{1}{1+\alpha}\Bigl(
\frac{|A_{s^*}(v)|}{r}-
\alpha\ln\frac{|A_{s^*}(v)|}{r}+
\alpha(\gamma+\ln\alpha-1)
\Bigr)
}
\]

\noindent By substituting $0.5$ for $\gamma > 0.5$ and $\alpha=e$ we can
maximize numerically the resulting upper bounding function of
$y=\frac{|A_{s^*}(v)|}{r}$ to at most a constant. Note that, when strong
equilibria exist, it is $\alpha=1$. Substituting so
in~(\ref{equation:analytic-metric-spoa-ub}) yields an upper bounding
expression similar to the one of PoS~(\ref{equation:metric-pos-ub}), up to
constant multiplicative and additive terms.\qed
}

\section{Open Problems}
\label{section:open-problems}

Further investigation of the weighted game is mostly challenging: it is not known whether this game possesses pure equilibria. Designing a counter-example seems quite demanding, as the game specializes in some sense the single-sink weighted network design game studied in~\cite{Chen06}. For this case pure equilibria were shown not to exist generally. Extending our analysis of the PoS and SPoA to the weighted metric case appears to be also non-trivial. Finally, derivation of good lower bounds for the PoS and the SPoA of the (non-metric) weighted case is an interesting aspect of research: lower bounding techniques developed  in~\cite{Albers08} do not readily apply for Facility Location.

\paragraph{Acknowledgements}{
For usefull discussions We thank the participants of the {\em Open Problems Jam Sessions} of CAGT: Daniel Andersson, Gudmund Skovb\-jerg Frand\-sen, Kristoffer Arnsfelt Hansen, Peter Bro Miltersen, Rocio Santillan Rodriguez,  Troels Bjerre Soerensen, Nikolaos Triandopoulos.
}

\end{document}